%% file: main.tex
\def\submissionblack{}

\documentclass[final]{IEEEtran}
\usepackage[T1]{fontenc}    
\usepackage{textcomp}       

\usepackage{mathtools}      
\usepackage{amsfonts}       
\usepackage{amssymb}        
\usepackage{amsthm}         
\usepackage{siunitx}        
\usepackage{optidef}

\usepackage{graphicx}       
\usepackage{color}          
\usepackage{epstopdf}       
\usepackage{psfrag}         

\usepackage{array}          
\usepackage{tabularx}       
\usepackage{booktabs}       
\usepackage{multirow}       
\usepackage{adjustbox}      
\usepackage{longtable}      

\usepackage{caption}        
\usepackage[caption=false,font=footnotesize,labelfont=rm,textfont=rm]{subfig}
\DeclareSubrefFormat{myformat}{#1(#2)} 
\usepackage{algorithm}    
\usepackage{algpseudocode} 

\usepackage{cite}           
\usepackage{url}            
\usepackage[colorlinks=true, allcolors=blue]{hyperref}

\usepackage{stfloats}       
\usepackage{verbatim}       
\usepackage{authblk}        
\usepackage{cuted}          
\usepackage{xpatch}         

\allowdisplaybreaks 

\hypersetup{
    colorlinks=true,        
    linkcolor=blue,         
    citecolor=red,         
    urlcolor=blue           
}

\ifdefined\submissionblack
\definecolor{blue}{rgb}{0,0,0}
\hypersetup{allcolors=black}
\fi
\begin{document}
\include{notation_jiajie} 
\title{Integrated Discovery and State-Aware Servicing for Mobile AUVs With UOWC: Modeling and Performance Analysis}

\author{
Qiyu~Ma,~\IEEEmembership{Student Member,~IEEE, }
Jiajie Xu,~\IEEEmembership{ Member, IEEE} 
and~Mohamed-Slim~Alouini,~\IEEEmembership{Fellow,~IEEE}
\thanks{\textcolor{blue}{Qiyu Ma is a visiting student with the Communication Theory Lab at King Abdullah University of Science and Technology, Thuwal 23955, Makkah Province, Saudi Arabia (Email: qiyu.ma@kaust.edu.sa).} Jiajie Xu and Mohamed-Slim Alouini are  with the Computer, Electrical, and Mathematical Science and Engineering division in King Abdullah University of Science and Technology, Thuwal 23955, Makkah Province, Saudi Arabia. (Emails: jiajie.xu.1@kaust.edu.sa; slim.alouini@kaust.edu.sa). \textcolor{blue}{The source code for reproducing the simulation and plotting results is available at: \url{https://github.com/Yuer-1218/SA-OPS}.}}
}

\markboth{}%
{Ma \MakeLowercase{\textit{et al.}}: Integrated Discovery and State-Aware Servicing for Mobile AUVs With UOWC}

\maketitle

\begin{abstract}
Underwater wireless optical communication (UWOC) is an enabling technology for high-throughput subsea networks, yet its long-term deployment is constrained by the finite energy budget of underwater nodes. To address this challenge, we investigate a mobile system wherein an autonomous underwater vehicle (AUV) performs joint wireless information transfer (WIT) and wireless power transfer (WPT) for a network of randomly distributed sensor nodes. This paper develops \textcolor{blue}{an integrated mission-level framework} that combines stochastic node discovery with state-aware servicing. First, we present an analytical model for node discovery based on a signal-to-noise ratio (SNR) analysis, deriving performance metrics that include the probability distribution of the discovery distance. Second, we introduce \textcolor{blue}{a threshold-based scheduling framework}, termed State-Aware Optimal Point Servicing (SA-OPS), which \textcolor{blue}{selects one of three actions according to the node's real-time energy state: preemptive charging, communication followed by charging, or communication only.} Simulations and multi-criteria decision analysis show that, \textcolor{blue}{under the considered assumptions and parameter ranges}, SA-OPS can improve the tradeoff between AUV energy expenditure and network-wide energy health relative to the adopted baseline strategies. The results also indicate that the selected charging threshold can be approximated by \textcolor{blue}{a simple state-dependent heuristic}, providing a practical guideline for autonomous energy replenishment in underwater networks.
\end{abstract}

\begin{IEEEkeywords}
Underwater wireless optical communication (UWOC), wireless power transfer, energy harvesting, joint WIT/WPT scheduling.
\end{IEEEkeywords}

\IEEEpeerreviewmaketitle

\section{Introduction}
\subsection{Background}
With the advent of the sixth-generation (6G) mobile communication technology \cite{dang2020should}, establishing a globally space-air-ground-sea integrated network (SAGSIN) has emerged as a key vision for the future \cite{xu2023space}. \textcolor{blue}{Within this vision, underwater wireless networks form a technically challenging segment that supports ocean monitoring, exploration, and preservation} \cite{kao2017comprehensive}. Traditional underwater acoustic communication, constrained by high latency and limited bandwidth, \textcolor{blue}{may be unsuitable for applications requiring high-throughput data transfer}, such as ocean monitoring, resource exploration, and the \textcolor{blue}{Internet of Underwater Things} (IoUT) \cite{DOMINGO20121879}. In this context, UWOC \textcolor{blue}{can support multi-Gbps data rates in demonstrated links \cite{lu201960m,11145354}, together with low latency and high physical-layer security}. \textcolor{blue}{Recent advances in energy-efficient micro-LED transmitters \cite{9878216} and underwater optical modem prototypes \cite{saksvik2025sipm} further improve the practical feasibility of UWOC networks.}

However, despite the advantages of UWOC in communication rate, large-scale and long-term deployment remains limited by the energy budget of underwater nodes/sensors. Many underwater sensor or relay nodes rely on internal batteries of finite capacity, and their replacement or recharging is expensive and often impractical \cite{s18010051}. Wireless charging technologies \cite{6266681}, particularly optical wireless power transfer \cite{9852974}, provide a possible approach for in-situ energy replenishment. This motivates a joint WIT/WPT design in which the same optical front-end supports both information delivery and energy transfer. Although the protocol considered in this work adopts a time-division implementation rather than strictly simultaneous transfer, it follows this shared-hardware design principle. In this context, we study the mission efficiency of a mobile underwater system performing joint WIT and WPT by \textcolor{blue}{combining stochastic node discovery with a state-aware servicing policy}, termed state-aware optimal point servicing (SA-OPS), to balance network sustainability and AUV operational efficiency.

\subsection{Related Work}
\textcolor{blue}{Acoustic and ultrasonic techniques can provide noncontact underwater power transfer, but practical designs remain constrained by transfer range, transducer directionality, and conversion efficiency \cite{6266681,9217956,10460553}; acoustic SWIPT also requires explicit allocation of communication and energy-transfer resources \cite{11078658}. Magnetic-induction and RF alternatives generally require short transfer distances and tight coil or antenna alignment, which complicate mobile AUV servicing \cite{mohsan2020review,mohsan2022enabling,10460553}.} Consequently, research on systems combining optical information and power transfer for underwater applications has progressed along \textcolor{blue}{several} complementary fronts: (i) theoretical performance modeling; (ii) practical system optimization\textcolor{blue}{; (iii) mobile AUV mission planning and energy-aware service control}. The existing research can be reviewed as follows.

On the theoretical front, foundational studies have established performance bounds for point-to-point dual-function links, primarily analyzing the impact of oceanic turbulence under fixed link geometries, often under a SWIPT model \cite{Ghasvarian,Uysal}. \textcolor{blue}{Concurrently, practical research has focused on link-level component optimization, including transmitter modulation that increases charging speed over established connections \cite{Maj}, wide-angle transceivers that mitigate local alignment jitter \cite{liu}, and receiver architectures that manage the tradeoff between communication and harvesting \cite{lim}.}

\textcolor{blue}{Regarding AUV mission planning, deep-reinforcement-learning and heuristic methods optimize visitation or information-collection paths over known candidate targets \cite{10582446,7932121}, while the underwater power-and-data-transfer literature generally organizes mobile servicing around planned link or rendezvous opportunities \cite{10460553}. These formulations provide useful route-planning benchmarks but do not model the stochastic optical discovery stage considered here.}
\textcolor{blue}{A related line of work studies energy-aware networking and mobile service control, including wake-up or energy-saving mechanisms for underwater networks \cite{11077794}, energy-aware path planning for AUV-assisted underwater sensor networks \cite{Acarer_2024}, and topology-adaptive or SWIPT-enabled energy management in wireless sensor networks \cite{10599885}. These studies provide useful design principles for energy-constrained service systems, but they usually focus on routing, clustering, path planning, or link-level energy transfer rather than on the coupled process of stochastic optical discovery followed by residual-energy-aware WIT/WPT servicing.}
\textcolor{blue}{Beyond underwater-specific studies, stochastic network control provides a general basis for state-dependent decisions in time-varying wireless systems \cite{neely2006energy}, while mobile charging studies have developed energy-urgency prioritization \cite{lei2023urgency} and charging-utility maximization \cite{ma2018charging}. These works establish energy-aware scheduling and adaptive replenishment as important components of energy-constrained service systems. The contribution of this study lies in integrating these components with SNR-limited stochastic optical discovery and immediate residual-energy-aware WIT/WPT servicing in a tractable AUV mission model.}

Overall, many existing studies start from a pre-established link or a deterministic service setting, which is appropriate for link-level analysis but does not fully capture the mission process of a mobile underwater system performing joint WIT and WPT. In particular, the stochastic discovery phase, in which a mobile platform must first locate a target node in a random field, is often simplified or omitted. In addition, the service decision is commonly decoupled from the node's real-time residual energy. \textcolor{blue}{These observations motivate a framework that connects random discovery with energy-state-dependent servicing while keeping the decision logic sufficiently simple for AUV operation.}

\subsection{Contribution}
Motivated by these gaps, this paper develops an integrated framework for analyzing a mobile underwater system performing joint WIT and WPT. We examine how stochastic discovery, physical-layer configurations, and operational scheduling affect mission efficiency and network sustainability. The resulting SA-OPS policy is \textcolor{blue}{a state-aware, threshold-based servicing strategy} that balances AUV energy expenditure with the residual-energy condition of the underwater network. \textcolor{blue}{Specifically, the framework addresses three coupled problems. First, it characterizes how a mobile AUV discovers a randomly located underwater node under SNR-limited optical scanning, which provides discovery-distance and search-time metrics instead of assuming a pre-established service link. Second, it connects the discovered link condition with the node's residual energy state, allowing the service action to distinguish critically depleted, intermediate-energy, and sufficiently healthy nodes. Third, it evaluates the resulting energy-service tradeoff through mission-level KPIs, so that the charging threshold can be selected according to both network sustainability and AUV energy expenditure.} The primary contributions of this paper are threefold:
\begin{itemize}

    \item \textbf{Integrated Mission Modeling.} We establish an analytical framework that \textcolor{blue}{jointly considers stochastic node discovery and subsequent mobile servicing} in a joint WIT/WPT mission. Departing from static-link-only analyses, the model incorporates a three-dimensional (3D) Poisson point process (PPP) for the network topology, enabling characterization of the mission process from initial search to service execution.
    \item \textbf{SNR-Based Performance Derivation.} \textcolor{blue}{We develop a physically grounded discovery model based on a detailed SNR analysis for silicon photomultiplier (SiPM)-based receivers, from which we derive closed-form or semi-analytical expressions for the maximum angle-dependent detection distance, effective scan success volume, expected search time, and distance distribution of the first discovered node.}
    \item \textbf{State-Aware Policy and Threshold Selection.} We propose SA-OPS as \textcolor{blue}{a state-aware threshold policy} that adapts its operational logic based on the real-time energy state of the target node. Through numerical analysis and multi-criteria decision analysis (MCDA), we examine: (i) how the policy behaves under changes in node density \textcolor{blue}{within the considered parameter range}; (ii) how the healthy-energy threshold can be formulated as a multi-objective selection problem; and (iii) how \textcolor{blue}{a simple state-dependent heuristic} can provide a practical threshold-selection guideline.

\end{itemize}

\subsection{Organization}
The remainder of this paper is organized as follows. Section \ref{sec:system_model} details the system architecture, including the network topology, the underwater optical channel model, the receiver SNR model, and the time-division operational protocol. Section \ref{sec:discovery_analysis_SNR} presents the SNR-based node discovery analysis, deriving metrics such as the success volume, expected search time, and distance distribution of the discovered node. In Section \ref{sec:optimal_scheduling}, we introduce the SA-OPS policy, \textcolor{blue}{a threshold-based scheduling rule} that integrates physical-layer power calculation with state-dependent decision logic. Section \ref{sec:simulation_results} presents Monte Carlo simulation results used to validate the analytical expressions, compare SA-OPS with the adopted baseline policies, and examine heuristic threshold selection. Finally, Section \ref{sec:conclusion} concludes the paper.

\section{System Model}
\label{sec:system_model}
In this section, we establish the foundational models for the underwater environment, communication channel, and operational protocols that underpin our analysis. A comprehensive list of symbols and their definitions is provided in Table \ref{tab:key_symbols}.
\begin{table*}[!t]
\centering
\caption{Key Symbols and Definitions}\label{tab:key_symbols}
\footnotesize
\begin{tabularx}{\textwidth}{@{} c X c X @{}}
\toprule
\textbf{Symbol} & \textbf{Definition} & \textbf{Symbol} & \textbf{Definition} \\
\midrule
\multicolumn{4}{l}{\textit{\textbf{Network \& Mission Parameters}}} \\
$\lambda_{\text{node}}$ & 3D-PPP node density [\si{\per\cubic\meter}] & $d,R$ & Link distance and first-discovery distance [\si{\meter}] \\
$v,t_{\mathrm{com}}$ & AUV velocity [\si{\meter\per\second}] and payload duration [\si{\second}] & $t,\alpha(t)$ & Time and WIT/WPT switching function \\
\textcolor{blue}{$T_{\text{dwell}}$}, $N$ & \textcolor{blue}{Scan duration and number of scans for spherical coverage} & \textcolor{blue}{$d_{\text{min}}$} & \textcolor{blue}{Closest feasible service distance} [\si{\meter}] \\
\textcolor{blue}{$W_{\text{AUV,Storage}}$} & \textcolor{blue}{Onboard AUV energy storage} [\si{\joule}] & \textcolor{blue}{$P_{\text{AUV,Platform}}$} & \textcolor{blue}{Aggregate AUV platform power} [\si{\watt}] \\
\textcolor{blue}{$N_{\mathrm{node}},N_{\mathrm{MC}}$} & \textcolor{blue}{Network nodes and Monte Carlo runs} & \textcolor{blue}{$N_{\mathrm{succ}},N_{\mathrm{svc}}$} & \textcolor{blue}{Nodes in a success volume and completed services} \\
\midrule
\multicolumn{4}{l}{\textit{\textbf{Optical Link \& Environment}}} \\
$P_{Tx}(t)$ & Transmitted optical power [\si{\watt}] & \textcolor{blue}{$P_{\mathrm{Tx,com}}^*,P_{\mathrm{Tx,WPT}}$} & \textcolor{blue}{WIT minimum power and WPT rated power} [\si{\watt}] \\
\textcolor{blue}{$m,m(t),m_{\max}$} & \textcolor{blue}{Static, service-stage, and maximum Lambertian orders} & $\phi_{1/2}$ & Tx half-power semi-angle [\si{\radian}] \\
\textcolor{blue}{$\theta,\psi,\theta(t)$} & \textcolor{blue}{Irradiance, incidence, and radial pointing-error angles} & \textcolor{blue}{$\theta_x,\theta_y$} & \textcolor{blue}{Tangent-plane pointing errors} \\
\textcolor{blue}{$\sigma_x^2,\sigma_y^2$} & \textcolor{blue}{Pointing-jitter variances} [\si{\radian\squared}] & $H_{\text{channel}},H_{\text{move}}(t)$ & Static and service-stage channel gains \\
\textcolor{blue}{$H_{\mathrm{fad}}$} & \textcolor{blue}{Effective gain including irradiance fading} & \textcolor{blue}{$I_f,\bar I_f$} & \textcolor{blue}{Fading factor and its mean} \\
$c(\lambda),\lambda$ & Attenuation coefficient [\si{\per\meter}] and wavelength & $A_{rx},\phi_{\text{FoV}}$ & Aperture area and receiver FoV semi-angle \\
$n,g(\psi)$ & Concentrator refractive index and gain & $T_s(\psi),\mathfrak{R}(\psi)$ & Filter transmittance and receiver angular gain \\
$P_{Rx}(t),P_{\text{solar}}$ & Signal and background optical powers [\si{\watt}] & $E_{\text{sun}}(0),L_{\text{deep}}$ & Sea-level irradiance and operating depth \\
$\epsilon,\zeta_r$ & Solar attenuation and reflectance factors & $L_f,\Delta\lambda$ & Directional factor and optical-filter bandwidth \\
\midrule
\multicolumn{4}{l}{\textit{\textbf{Receiver, Noise \& Energy Parameters}}} \\
\textcolor{blue}{$G,\mathcal{R}_p,\eta$} & \textcolor{blue}{SiPM gain, primary responsivity, and photon-detection efficiency} & \textcolor{blue}{$F,B$} & \textcolor{blue}{Excess-noise factor and receiver bandwidth} \\
\textcolor{blue}{$I_d,R_L$} & \textcolor{blue}{Primary dark current and load resistance} & \textcolor{blue}{$q,h,\nu,c_0,k_B,T$} & \textcolor{blue}{Electron charge, Planck constant, optical frequency, light speed, Boltzmann constant, and temperature} \\
SNR & Electrical signal-to-noise ratio & \textcolor{blue}{$\mathrm{SNR}_{\text{th,disc}},\mathrm{SNR}_{\text{th,com}}$} & \textcolor{blue}{Discovery and communication SNR thresholds} \\
$P_{\text{th,disc}},P_{\text{th,com}}$ & Discovery and WIT power thresholds [\si{\watt}] & $P_{\text{th,charge}}$ & \textcolor{blue}{Charging-circuit activation threshold} [\si{\watt}] \\
$\eta_{\text{EH}},f_{\mathrm{EH}}(\cdot)$ & Energy-conversion efficiency and calibrated harvester model & \textcolor{blue}{$W_{\text{charge}}$} & \textcolor{blue}{Harvested energy} [\si{\joule}] \\
$W_{\text{node}},E_{\text{res}}$ & Node capacity and residual energy [\si{\joule}] & \textcolor{blue}{$P_{\text{com}},P_{\text{sleep}}$} & \textcolor{blue}{Node communication and sleep powers} [\si{\watt}] \\
\midrule
\multicolumn{4}{l}{\textit{\textbf{Policy, Optimization \& Discovery}}} \\
$E_{\text{comm}},E_{\text{healthy}}$ & Communication and healthy-energy thresholds [\si{\joule}] & \textcolor{blue}{$e,g_e$} & \textcolor{blue}{Normalized threshold and SA-OPS rule} \\
\textcolor{blue}{$\mathbf{k},\bar{\mathbf{k}},\widehat{\mathbf{k}}$} & \textcolor{blue}{Realization, ensemble, and sample-average KPI vectors} & \textcolor{blue}{$\mathcal{U},\mathcal{E}$} & \textcolor{blue}{Multi-criteria utility and feasible threshold set} \\
$\mu,\sigma,\beta$ & Initial-energy statistics and heuristic coefficient & \textcolor{blue}{$T_{50,\mathrm{post}}$} & \textcolor{blue}{Post-mission time to 50\% node failure} [\si{\hour}] \\
$d_{\max}(\theta),V_{\text{success}}$ & Detection range and success volume & $p_s,E[T_{\text{search}}]$ & Per-scan success probability and expected search time \\
$V(r),F_R(r),f_R(r)$ & \textcolor{blue}{Detectable volume and conditional distance CDF/PDF} & \textcolor{blue}{$\mathcal{C}_K$} & \textcolor{blue}{Set of $K$ polled baseline candidates} \\
\textcolor{blue}{$V_{\mathrm{vis}},N_{\mathrm{vis}},\Lambda$} & \textcolor{blue}{Visible volume, visible-node count, and its mean} & \textcolor{blue}{$\boldsymbol{\xi}$} & \textcolor{blue}{Finite-mission random realization} \\
\bottomrule
\end{tabularx}
\end{table*}
\begin{figure}[!t]
    \centering
    \includegraphics[width=\columnwidth]{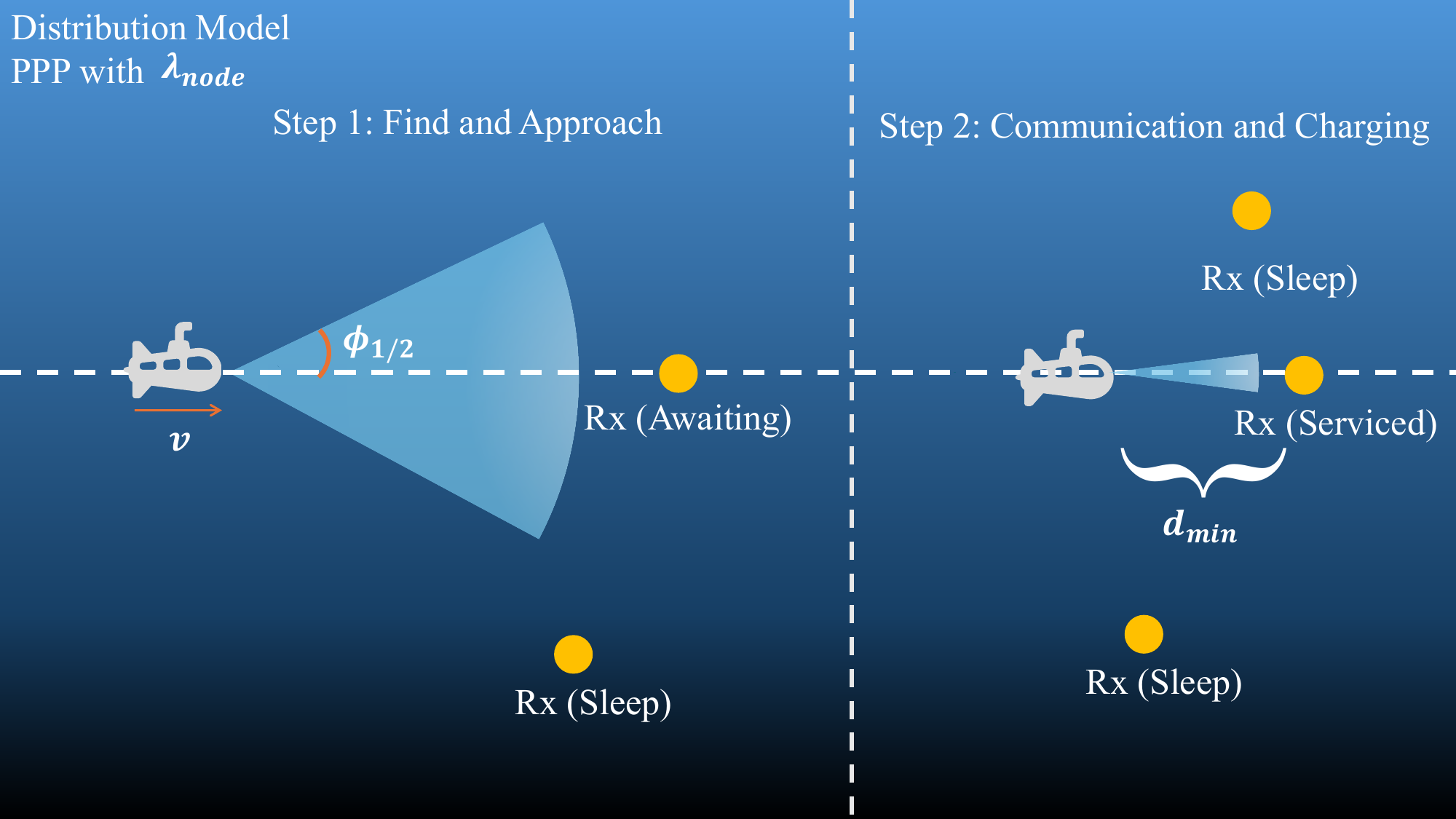}
    \caption{\textcolor{blue}{System model and mission workflow for AUV-assisted joint WIT/WPT. The AUV scans a random 3D underwater node field, approaches a node after SNR-based discovery, establishes a local state-exchange link near $d_{\text{min}}$, and then performs the required WIT/WPT service.}}
    \label{fig:system_model_illustration} 
\end{figure}

\subsection{Network Topology and Channel Model}
\textcolor{blue}{This subsection defines the spatial distribution of the underwater nodes and the optical channel used throughout the discovery and servicing analysis. We first specify the PPP-based network topology and then introduce the channel-gain expression that links geometry, attenuation, and receiver orientation.}
We consider a mobile transmitter (Tx), mounted on an AUV (underwater mobile platform), tasked with discovering and servicing a network of sensor nodes. The geometric configuration and operational phases of this system are illustrated in Fig. \ref{fig:system_model_illustration}. These receiver nodes (Rx) are assumed to be spatially distributed according to a homogeneous PPP with a constant density $\lambda_{\text{node}}$ within a 3D volume \textcolor{blue}{\cite{streit2010poisson}}. This model is widely adopted to capture random and unstructured node \textcolor{blue}{locations} \cite{elsawy2016modeling}. \textcolor{blue}{The PPP represents the spatial node field at each discovery or sampling instant and supports ensemble-averaged performance characterization of an uncoordinated deployment. Slow current-induced drift may change node locations between successive snapshots. During one local approach and handshake, the associated node is treated as quasi-static. If $v_{\text{drift}}$ is the relative drift speed and $T_{\text{loc}}$ is the approach-plus-handshake duration, this approximation requires $v_{\text{drift}}T_{\text{loc}}$ to remain small relative to the tracking margin and receiver field-of-view footprint. The periodic tracking pings in Phase 1 update the line-of-sight direction within this interval; larger displacements require reacquisition or an explicit mobility filter.}

\textcolor{blue}{A scan can contain more than one detectable node. The discovery layer therefore associates the AUV with the node having the largest estimated received optical power; under comparable orientation and attenuation conditions, this rule favors the nearest node. Once selected, that node remains exclusively associated with the AUV through approach, state exchange, and service completion, while the other detected nodes are deferred to later scans. This rule defines the single-node service abstraction used in the analysis; contention among simultaneous responses and joint multi-node scheduling remain outside the present protocol model.}

\textcolor{blue}{Fig.~\ref{fig:system_model_illustration} summarizes the mission workflow considered in this paper. The AUV first scans the random node field and declares a node discoverable when the received SNR exceeds the discovery threshold. It then approaches the associated node, establishes a short local state-exchange link near $d_{\text{min}}$, and performs WIT/WPT service. The state-dependent service rule itself is introduced later in Section~\ref{sec:optimal_scheduling}.}

The channel gain of the UWOC link, which dictates the power transfer efficiency between the Tx and an Rx, is a function of the distance $d$, the transmitter's irradiance angle $\theta$, and the receiver's incidence angle $\psi$. Following a widely adopted and validated model \cite{kahn1997wireless,hamza2016investigation}, the channel gain is expressed as 
\begin{align}
    H_{\text{channel}}(d, \theta, \psi) = \frac{A_{rx}(m+1)}{2\pi d^2} \cos^m(\theta) e^{-c(\lambda)d} \mathfrak{R}(\psi),
    \label{eq:channel_gain_general_model}
\end{align}
where $m = -\ln(2)/\ln(\cos(\phi_{1/2}))$ is the Lambertian order related to the transmitter's semi-angle at half power, $\phi_{1/2}$. The term $c(\lambda)$ represents the beam attenuation coefficient for a given wavelength $\lambda$, and $A_{rx}$ is the receiver's physical aperture area. The composite receiver-side angular gain, $\mathfrak{R}(\psi)$, is defined as 
\begin{align}
    \mathfrak{R}(\psi) = T_s(\psi) g(\psi) \cos(\psi),
\end{align}
which incorporates the optical filter transmittance $T_s(\psi)$ and the gain of the non-imaging optical concentrator, $g(\psi)$. The concentrator gain is given by
\begin{align}
g(\psi) = 
\begin{cases}
    n^2 / \sin^2(\phi_{\text{FoV}}), & 0 \le \psi \le \phi_{\text{FoV}},  \\
    0, & \psi > \phi_{\text{FoV}},
\end{cases}
\end{align}
where $\phi_{\text{FoV}}$ is the receiver's field-of-view (FoV) semi-angle and $n$ is its internal refractive index.
\textcolor{blue}{Physically, \eqref{eq:channel_gain_general_model} separates the received-power loss into geometric spreading through $1/d^2$, transmitter directivity through $\cos^m(\theta)$, water absorption and scattering through $e^{-c(\lambda)d}$, and receiver orientation and collection efficiency through $\mathfrak{R}(\psi)$.}

{\color{blue}
The deterministic channel gain in \eqref{eq:channel_gain_general_model} represents the Beer--Lambert/Lambertian component of the link budget. If turbulence-induced scintillation is included, it can be extended by a nonnegative multiplicative irradiance-fading factor as
\begin{align}
    H_{\mathrm{fad}}(d,\theta,\psi)
    = H_{\text{channel}}(d,\theta,\psi) I_f,
    \label{eq:fading_channel_extension}
\end{align}
where $I_f$ captures random irradiance fluctuations and $\bar I_f\triangleq\mathbb{E}[I_f]$ denotes its mean. Hence, quantities that are linear in the channel gain satisfy $\mathbb{E}[H_{\mathrm{fad}}]=H_{\text{channel}}\bar I_f$. Common choices for $I_f$ in UWOC turbulence studies include lognormal, Gamma--Gamma, and exponential-generalized-Gamma models \cite{oubei2017simple,zedini2019unified}. Hereafter, $H_{\mathrm{fad}}$ denotes the effective link gain. In the numerical evaluation, $I_f$ is retained as a single aggregate channel factor; its turbulence-, beam-wander-, biofouling-, and scattering-related components are not separately decomposed.
}

\begin{figure*}[!b]
\normalsize
\hrulefill
\begin{equation} \label{eq:snr_full}
\mathrm{SNR}(P_{Rx}) = \frac{\left(G \mathcal{R}_p P_{Rx}\right)^2}{2qG^2F B \left(\mathcal{R}_p P_{Rx} + \mathcal{R}_p P_{\text{solar}} + \textcolor{blue}{I_d}\right) + \frac{4k_B T B}{R_L}}.
\end{equation}
\end{figure*}

\subsection{Receiver SNR Model and Performance Thresholds}
\textcolor{blue}{This subsection translates the received optical power into communication and discovery feasibility through an SNR model. The resulting power thresholds provide the common interface between the physical-layer channel model and the higher-level discovery and servicing decisions.}
For reliable operation, the received optical power must be sufficient to meet distinct performance criteria for \textcolor{blue}{initial node discovery, subsequent data communication, and effective energy charging}. We ground these criteria in the electrical SNR at the receiver, which provides a more physically meaningful basis than simple power thresholds.

For a SiPM-based receiver, the electrical SNR \textcolor{blue}{determines the adopted detection and communication criteria}. The total receiver noise \textcolor{blue}{combines several independent sources}. The signal photocurrent after avalanche gain is given by
\begin{align}
    I_{\text{signal}} = G \mathcal{R}_p P_{Rx},
\end{align}
where $G$ is the SiPM gain, $P_{Rx}$ is the received optical signal power, and $\mathcal{R}_p = \eta q / (h\nu) = \eta \lambda q / (h\textcolor{blue}{c_0})$ is the primary responsivity of the photodetector. \textcolor{blue}{Here, $q$ is the electron charge, $h$ is Planck's constant, $\nu$ is the optical frequency, and $c_0$ is the speed of light in vacuum.}

The total noise variance, $\sigma_{\text{total}}^2$, is the sum of shot noise and thermal noise,
\begin{align}
    \sigma_{\text{total}}^2 = \sigma_{\text{shot}}^2 + \sigma_{\text{th}}^2.
\end{align}
The shot noise arises from statistical fluctuations in the total primary current, which includes contributions from the signal, background solar radiation, and intrinsic dark current \cite{ghassemlooy2019optical}. Its variance is
\begin{align}
    \sigma_{\text{shot}}^2 = 2q G^2 F B \left( \mathcal{R}_p P_{Rx} + \mathcal{R}_p P_{\text{solar}} + \textcolor{blue}{I_d} \right),
\end{align}
where $F$ is the excess noise factor, $B$ is the receiver bandwidth, $I_{d}$ is the primary dark current (before gain), and $P_{\text{solar}}$ is the incident background solar power. The incident background solar power, $P_{\text{solar}}$, is modeled as \cite{1605919, 4686801},
\begin{align}
    P_{\text{solar}} = A_{rx} \phi_{\text{FoV}}^2 L_f \zeta_r E_{\text{sun}}(z=0) e^{-\epsilon L_{\text{deep}}} \Delta\lambda T_s(\psi).
    \label{eq:p_solar_definition}
\end{align}
The thermal noise variance is independent of the optical signal and is given by
\begin{align}
    \sigma_{\text{th}}^2 = \frac{4k_B T B}{R_L},
\end{align}
where $k_B = 1.38\times10^{-23}  J/K$ is the Boltzmann constant, $T$ is the absolute temperature, and $R_L$ is the load resistance.

The overall electrical SNR, defined as the ratio of the squared signal current to the total noise variance, is thus given by the comprehensive model in \eqref{eq:snr_full} (presented at the bottom of this page). \textcolor{blue}{The numerator represents the useful electrical signal power generated from the received optical signal, whereas the denominator aggregates the random fluctuations caused by signal-dependent shot noise, background-light noise, dark current, and receiver thermal noise. An algebraically equivalent SiPM SNR construction, combining the same signal and receiver-noise components, is used in \cite{ma2026offset}.}

\textcolor{blue}{This SNR model leads to three task-dependent received-power thresholds with different physical roles. The discovery threshold $P_{\text{th,disc}}$ is the minimum received optical power required for the receiver signal to be distinguished from noise during the scanning stage; it defines whether a node belongs to the detectable region used in the stochastic discovery analysis. The communication threshold $P_{\text{th,com}}$ is the minimum received optical power required to support reliable WIT under the target BER requirement; in this work, solving $\mathrm{SNR}(P_{Rx})=\mathrm{SNR}_{\text{th,com}}$ with $\mathrm{SNR}_{\text{th,com}}\approx\SI{13.5}{\decibel}$ gives $P_{\text{th,com}}=\SI{1.2}{\micro\watt}$. The charging threshold $P_{\text{th,charge}}$ is the minimum received input power at which the harvesting circuit is treated as active; it does not by itself guarantee positive net battery replenishment, which additionally requires the harvested DC power to exceed the node load. Together, these thresholds connect the optical channel and receiver model to the three mission operations considered in this paper: discovering a node, communicating with it, and initiating energy harvesting.}

\subsection{Mobile Operation and Time-Division Protocol}
\textcolor{blue}{This subsection describes how the AUV uses the optical link over time after a node has been discovered. We define the time-division operation between WIT and WPT and use it to express the node energy dynamics that later drive the state-aware policy.}
After a node is discovered at an initial distance $R$, the AUV moves towards it at a constant velocity $v$. The distance at time $t$ is therefore $d(t)=R-vt$. During this approach, the AUV can dynamically adjust its beam width and thus its Lambertian order $m(t)$ to optimize power delivery.

Pointing errors arise from the AUV's jitter. \textcolor{blue}{The independent horizontal and vertical tangent-plane components are modeled as $\theta_x\sim\mathcal{N}(0,\sigma_x^2)$ and $\theta_y\sim\mathcal{N}(0,\sigma_y^2)$. The radial pointing-error angle $\theta(t)$, measured from the nominal line of sight, satisfies $\tan\theta=\sqrt{\theta_x^2+\theta_y^2}$. The resulting general angular PDF is the Hoyt/Nakagami-$q$ form \cite{simon2004digital,paris2009nakagami}}
{\color{blue}
\begin{equation}
\begin{aligned}
p(\theta)
&=\frac{\tan\theta\sec^2\theta}{\sigma_x\sigma_y}
\exp\!\left[-\frac{\tan^2\theta}{4}
\left(\frac{1}{\sigma_x^2}+\frac{1}{\sigma_y^2}\right)\right]\\
&\quad\times I_0\!\left[\frac{\tan^2\theta}{4}
\left(\frac{1}{\sigma_x^2}-\frac{1}{\sigma_y^2}\right)\right],
\quad \theta\in[0,\pi/2),
\end{aligned}
\label{eq:exact_angle_pdf}
\end{equation}
}
\textcolor{blue}{where $I_0(\cdot)$ is the modified Bessel function of the first kind. When $\sigma_x=\sigma_y=\sigma$, $I_0(0)=1$ and \eqref{eq:exact_angle_pdf} reduces to the Rayleigh-based angular PDF \cite{4267802}.}
Due to this random jitter, the instantaneous received power is a time-varying random process:
\begin{equation}
P_{\mathrm{Rx}}(t) = P_{\mathrm{Tx}}(t) H_{\mathrm{move}}(t),
\end{equation}
where $P_{\mathrm{Tx}}(t)$ is the transmitted power and $H_{\mathrm{move}}(t)$ is the time-varying channel gain, which depends on the random angle $\theta(t)$.

To accommodate both WIT and WPT, the time-division multiplexing protocol is applied, which is central to the scheduling policies analyzed in Section \ref{sec:optimal_scheduling}. We define a binary switching function $\alpha(t)$ as
\begin{equation}
\alpha(t) =\begin{cases}
    1,  & \text{for WIT at time } t,\\
    0,  & \text{for WPT at time } t.
\end{cases}
\end{equation}
The total energy harvested by the receiver over an operational period of duration $t_{\mathrm{end}}$ is therefore \textcolor{blue}{given by}
\begin{equation}
  W_{\mathrm{charge}}=\eta_{\text{EH}}\int_0^{t_{\mathrm{end}}} P_{\mathrm{Rx}}(t) (1-\alpha(t)) \mathrm{d}t.
  \label{eq:energy_harvested}
\end{equation}
\textcolor{blue}{Communication and charging use different power rules. During WIT, the AUV selects the minimum transmit power that satisfies the communication SNR requirement. During WPT, it uses the rated charging power $P_{\mathrm{Tx,WPT}}$ subject to the transmitter limit, while $P_{\text{th,charge}}$ determines whether the harvesting circuit is active. The constant $\eta_{\text{EH}}$ in \eqref{eq:energy_harvested} is an effective mission-level conversion efficiency. Underwater photovoltaic conversion depends on the device and operating conditions \cite{rohr2023dive}, while practical SLIPT prototypes exhibit hardware-dependent charging behavior \cite{Maj,lim}. With circuit-specific measurements, the linear conversion term $\eta_{\mathrm{EH}}P_{\mathrm{Rx}}$ can be replaced by a calibrated harvested DC-power function $f_{\mathrm{EH}}(P_{\mathrm{Rx}})$, giving}
{\color{blue}
\begin{equation*}
W_{\mathrm{charge}}^{\mathrm{nl}}
=\int_0^{t_{\mathrm{end}}}f_{\mathrm{EH}}\!\left(P_{\mathrm{Rx}}(t)\right)
(1-\alpha(t))\,\mathrm{d}t.
\end{equation*}
}
\textcolor{blue}{The node's net battery change is obtained separately by subtracting its state-dependent load and applying the physical battery bounds. Under the adopted aggregate link-budget model, the received power at the 1-m service point remains many orders of magnitude above the 400-nW activation threshold. This comparison is used only to establish that the considered benchmark does not operate in the ultra-low-input turn-on region; the absolute charging power remains subject to finite-aperture overlap, pointing loss, saturation, and circuit-specific calibration. These effects can change the charging rate and thereby shift the service time, AUV energy expenditure, and calibrated value of $\beta$, and their circuit-level characterization is left for future work.}
\textcolor{blue}{These equations provide the communication and harvesting energy accounting used by the subsequent SA-OPS policy.}

\subsection{\textcolor{blue}{Problem Formulation}}
\label{subsec:problem_formulation}
\textcolor{blue}{The mission-level problem couples random node discovery, local service execution, and finite-energy operation. Given a random underwater node field, an optical channel model, node energy states, and a finite AUV energy budget, the AUV repeatedly discovers a node, moves to a feasible service location, and allocates optical transmission time and power between WIT and WPT. Each service must satisfy the SNR requirements for discovery and communication, the charging-circuit activation and net-replenishment conditions, the transmit-power limit, and the AUV energy budget.}

\textcolor{blue}{The decision family is the SA-OPS rule $g_e$, indexed by the normalized healthy-energy threshold $e=E_{\text{healthy}}/W_{\text{node}}$. Let $\boldsymbol{\xi}$ denote one finite-mission realization of the random node field and initial energy states, and let $\mathbf{k}(g_e;\boldsymbol{\xi})$ collect the resulting mission-level KPIs, including network survival time, critical-node rescue efficiency, AUV energy consumption, delivered energy, and final network-energy variance. The ensemble-averaged KPI vector is $\bar{\mathbf{k}}(g_e)=\mathbb{E}_{\boldsymbol{\xi}}[\mathbf{k}(g_e;\boldsymbol{\xi})]$, and the offline threshold-selection problem is written as}
\begin{align}
    \textcolor{blue}{e^* = \arg\max_{e \in \mathcal{E}} \mathcal{U}\!\left(\bar{\mathbf{k}}(g_e)\right)},
    \label{eq:problem_formulation}
\end{align}
\textcolor{blue}{Here, $\mathcal{E}$ is the feasible threshold set and $\mathcal{U}(\cdot)$ is a mission-dependent multi-criteria utility. In the numerical study, $\bar{\mathbf{k}}(g_e)$ is approximated by a finite-sample Monte Carlo average and $e^*$ is selected offline using MCDA. The resulting architecture combines stochastic discovery, local threshold-based service, ensemble-averaged finite-mission performance analysis, and sample-average threshold calibration. Once $e^*$ is calibrated, online execution requires only the residual-energy report of the associated node and a constant-complexity branch selection.}
\textcolor{blue}{This formulation covers the chain from stochastic node detection and association to local state-aware servicing. It does not solve the complementary global mission-planning problem with multi-target trajectory optimization, current-aware navigation, or decomposed propulsion dynamics; these layers can be integrated with the present structure in future work.}
\textcolor{blue}{The analysis is based on controlled assumptions. First, the numerical evaluation represents irradiance variation through the aggregate factor $I_f$ in \eqref{eq:fading_channel_extension}, without separately resolving turbulence, beam wander, biofouling, or time-varying scattering. Second, the associated node is quasi-static during one local approach-and-service interval. Third, residual-energy feedback is available after the acknowledged state exchange in Section~\ref{sec:optimal_scheduling}. Fourth, AUV propulsion is represented by an aggregate platform-power budget rather than a decomposed hydrodynamic model. These assumptions bound the results to controlled local service scenarios; distribution-specific fading, localization errors, current-aware mobility, biofouling, and detailed propulsion remain future work.}

\section{SNR-Based Node Discovery Analysis}
\label{sec:discovery_analysis_SNR}
\textcolor{blue}{This section develops an SNR-based analysis of the node discovery process. A node is declared discovered when its instantaneous electrical SNR exceeds the threshold $\mathrm{SNR}_{\text{th,disc}}$; as specified in Section \ref{sec:system_model}, the numerical analysis uses $\mathrm{SNR}_{\text{th,disc}} = \SI{3}{\decibel}$.}
\textcolor{blue}{The derivation proceeds in three steps. First, the SiPM receiver model is converted into an equivalent received-power threshold by solving the SNR constraint. Second, this threshold is combined with the UWOC channel gain to define an angle-dependent detectable region and its success volume. Third, the success volume is inserted into the void probability of the 3D PPP to obtain the discovery probability, expected search time, and first-discovered-node distance distribution. This sequence separates the physical-layer link-budget calculation from the stochastic-geometry step and clarifies the assumptions used in each part of the derivation.}

\subsection{Effective Power Threshold for Discovery}
\textcolor{blue}{This subsection establishes the received-power threshold required for a node to be considered discoverable. This threshold converts the SNR requirement into a distance- and angle-dependent detection condition used in the following geometric analysis.}
The first step is to translate the SNR criterion into an equivalent minimum required optical power, which we denote $P_{\text{th,disc}}$, and this is accomplished by solving $\mathrm{SNR}(P_{Rx}) = \mathrm{SNR}_{\text{th,disc}}$ using Equation~(\ref{eq:snr_full}). This yields a quadratic equation in $P_{Rx}$
\begin{align}
\!(G\mathcal{R}_p)^2 P_{Rx}^2 
\!-\! (2qG^2FB\mathcal{R}_p \mathrm{SNR}_{\text{th,disc}}) P_{Rx}\! -\!\mathrm{SNR}_{\text{th,disc}} N_0 \!=\! 0,
\end{align}
where $N_0 = 2qG^2FB(\mathcal{R}_p P_{\text{solar}} + I_{d}) + \frac{4k_BTB}{R_L}$ represents the signal-independent noise floor. The positive root of this equation yields the exact discovery power threshold, $P_{\text{th,disc}}$. \textcolor{blue}{To expose the intermediate step, define \(a=(G\mathcal{R}_p)^2\), \(b=2qG^2FB\mathcal{R}_p\mathrm{SNR}_{\text{th,disc}}\), and \(c=\mathrm{SNR}_{\text{th,disc}}N_0\). The threshold equation is then \(aP_{Rx}^2-bP_{Rx}-c=0\), whose physically admissible root is}
\begin{align*}
\textcolor{blue}{
P_{\text{th,disc}}
=\frac{b+\sqrt{b^2+4ac}}{2a}.}
\end{align*}
\textcolor{blue}{When the signal-dependent shot-noise contribution is small at the discovery boundary, i.e., \(bP_{\text{th,disc}}\ll c\) (equivalently \(b^2\ll4ac\)), the linear term can be neglected relative to the signal-independent noise term. The exact root then reduces to}
\begin{align}
    P_{\text{th,disc}} \approx \frac{\sqrt{\mathrm{SNR}_{\text{th,disc}} N_0}}{G\mathcal{R}_p}.
    \label{eq:prx_threshold_approx}
\end{align}
This value, $P_{\text{th,disc}}$, \textcolor{blue}{is used in the subsequent discovery analysis}. A node at distance $d$ and angle $\theta$ is detectable if $P_{Tx} H_{\mathrm{fad}}(d, \theta, \psi) > P_{\text{th,disc}}$.
\textcolor{blue}{The approximation in \eqref{eq:prx_threshold_approx} shows the physical meaning of the discovery threshold: a larger noise floor or a stricter discovery SNR requirement increases the required received optical power, whereas a larger detector gain or responsivity lowers the required optical power.}

\subsection{Maximum Detection Distance and Success Volume}
\textcolor{blue}{This subsection characterizes the spatial region in which a node can be discovered during a scan. We first derive the maximum detectable distance as a function of the off-axis angle and then integrate this boundary to obtain the effective success volume.}
\begin{proposition}[Maximum Detection Distance]
\label{prop:dmax_snr}
Assuming parallel alignment ($\theta=\psi$) and a receiver within the transmitter's beam and its own FoV, the maximum range at which the receiver can be discovered at an angle $\theta$ from the transmitter's axis is given by
\begin{align}
    d_{\max}(\theta) 
    = \frac{2}{c(\lambda)}
    W_0\!\left(
    \frac{c(\lambda)}{2}
    \sqrt{K_{\mathrm{SNR}}\cos^{m+1}(\theta)}
    \right),
    \label{eq:dmax_snr_result}
\end{align}
where $W_0(\cdot)$ is the principal branch of the Lambert $W$ function and the constant $K_{\mathrm{SNR}}$ is defined as 
\begin{align}
    K_{\mathrm{SNR}} 
    = \frac{P_{Tx}A_{rx}(m+1)\textcolor{blue}{T_s(\theta)g(\theta)\bar I_f}}{2\pi P_{\text{th,disc}}}.
\end{align}
\end{proposition}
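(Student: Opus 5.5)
The plan is to start from the detectability condition $P_{Tx}\,\mathbb{E}[H_{\mathrm{fad}}(d,\theta,\psi)] = P_{\text{th,disc}}$ at the boundary. Using the linearity noted after \eqref{eq:fading_channel_extension}, this becomes $P_{Tx} H_{\text{channel}}\bar I_f = P_{\text{th,disc}}$. We then substitute \eqref{eq:channel_gain_general_model} with $\psi=\theta$, so that $\mathfrak{R}(\theta)=T_s(\theta)g(\theta)\cos\theta$. The assumption that the receiver lies inside its FoV guarantees $g(\theta)=n^2/\sin^2\phi_{\text{FoV}}>0$, hence the constant is nonzero. The factor $\cos\theta$ from $\mathfrak{R}$ merges with $\cos^m\theta$ to give $\cos^{m+1}\theta$. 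Collecting the remaining constants into $K_{\mathrm{SNR}}$ reduces the boundary equation to
\begin{align*}
\frac{e^{-c(\lambda)d}}{d^2} = \frac{1}{K_{\mathrm{SNR}}\cos^{m+1}(\theta)}.
\end{align*}

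Next I would put this transcendental equation into Lambert-$W$ form. Taking reciprocals and square roots (all quantities are positive for $d>0$) gives $d\,e^{c(\lambda)d/2}=\sqrt{K_{\mathrm{SNR}}\cos^{m+1}\theta}$. Multiplying both sides by $c(\lambda)/2$ and setting $x=c(\lambda)d/2$ yields $x e^{x} = \frac{c(\lambda)}{2}\sqrt{K_{\mathrm{SNR}}\cos^{m+1}\theta}$. Since the right-hand side is positive, $W_0$ is the unique real solution, which gives \eqref{eq:dmax_snr_result}.

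The step requiring care is justifying that this root is the \emph{maximum} detection distance, i.e., that the detectable set in $d$ is exactly $(0,d_{\max}(\theta)]$. I would argue this by monotonicity. The map $d\mapsto e^{-c(\lambda)d}/d^2$ is strictly decreasing on $(0,\infty)$, and equivalently $x e^{x}$ is strictly increasing for $x>0$. Hence the received power exceeds the threshold if and only if $d<d_{\max}(\theta)$. Uniqueness of the positive root also follows, which confirms the choice of the principal branch: the $W_{-1}$ branch applies only to negative arguments and is therefore irrelevant here.

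The remaining subtlety is interpretive rather than analytic. Because $I_f$ is random, the statement uses the mean-gain boundary, reflected in the factor $\bar I_f$ inside $K_{\mathrm{SNR}}$. I would state explicitly that detectability is defined through the mean effective gain; for a deterministic channel this reduces to $\bar I_f=1$.
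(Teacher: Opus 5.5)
Your proposal is correct and follows the paper's proof essentially step for step. You impose $P_{Tx}H_{\text{channel}}\bar I_f=P_{\text{th,disc}}$ with $\psi=\theta$, rearrange to $d^2e^{c(\lambda)d}=K_{\mathrm{SNR}}\cos^{m+1}(\theta)$, take the positive square root, substitute $z=c(\lambda)d/2$, and invert with $W_0$; your added monotonicity argument, which shows that the detectable set is exactly $d<d_{\max}(\theta)$ and that $W_0$ is the only relevant branch, usefully justifies the word ``maximum,'' a point the paper leaves implicit.
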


\begin{figure}[!t]
    \centering
     \includegraphics[width=\columnwidth]{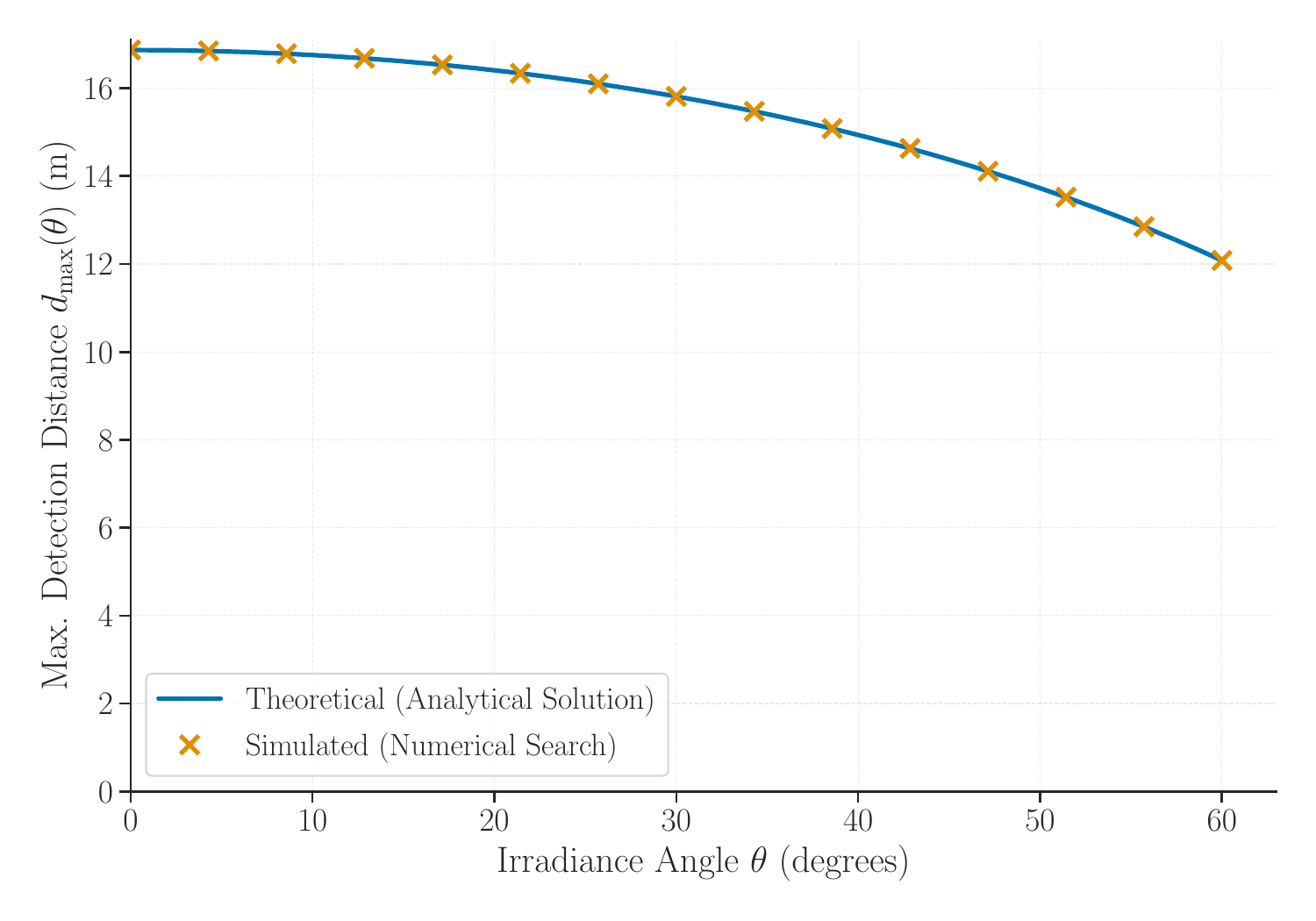}

    \caption{Maximum detection distance $d_{\max}(\theta)$ as a function of the off-axis angle $\theta$, comparing the analytical model  \eqref{eq:dmax_snr_result} with Monte Carlo simulations.}
    \label{fig:verification_d_max}
\end{figure}

\begin{proof}
\textcolor{blue}{The tractable discovery boundary is defined using the aggregate effective gain, $P_{Tx}\mathbb{E}[H_{\mathrm{fad}}(d,\theta,\psi)]=P_{Tx}H_{\mathrm{channel}}(d,\theta,\psi)\bar I_f=P_{\text{th,disc}}$. Under the parallel-alignment assumption $\psi=\theta$, the receiver angular term becomes $\mathfrak{R}(\theta)=T_s(\theta)g(\theta)\cos\theta$, which gives}
\begin{align}
\textcolor{blue}{
P_{\text{th,disc}}
 = \frac{P_{Tx}A_{rx}(m+1)T_s(\theta)g(\theta)\bar I_f}{2\pi d^2}
\cos^{m+1}(\theta)e^{-c(\lambda)d}.}
\end{align}
\textcolor{blue}{Rearranging yields $d^2e^{c(\lambda)d}=K_{\mathrm{SNR}}\cos^{m+1}(\theta)$. Since $d\ge0$ and $c(\lambda)>0$, taking the positive square root gives}
\begin{align}
\textcolor{blue}{
d e^{\frac{1}{2}c(\lambda)d}
= \sqrt{K_{\mathrm{SNR}}\cos^{m+1}(\theta)}.}
\end{align}
\textcolor{blue}{Multiplying both sides by $c(\lambda)/2$ and defining $z=c(\lambda)d/2$ gives $z e^z=\frac{c(\lambda)}{2}\sqrt{K_{\mathrm{SNR}}\cos^{m+1}(\theta)}$. Applying the principal Lambert-$W$ branch, $z=W_0(z e^z)$, yields \eqref{eq:dmax_snr_result}.}
\end{proof}
\textcolor{blue}{Thus, $d_{\max}(\theta)$ is the boundary at which the received power just equals the discovery threshold. A larger transmit power or receiver aperture expands this boundary, whereas stronger water attenuation, larger off-axis angle, or a higher discovery threshold reduces the discoverable range.}

\begin{corollary}[Effective Success Volume]
\label{cor:vsuccess_snr}
The volume in which a receiver can be discovered during a single scan (within the beam's semi-angle $\phi_{1/2}$) is found by integrating the elemental volumes over the solid angle of the beam:
\begin{align}
    V_{\text{success}}
    = \int_{0}^{2\pi} \int_{0}^{\phi_{1/2}} \int_{0}^{d_{\max}(\theta)} \rho^2 \sin(\theta) \,\mathrm{d}\rho \,\mathrm{d}\theta \,\mathrm{d}\varphi,
\end{align}
which simplifies to
\begin{align}
    V_{\text{success}}
    = \frac{2\pi}{3}
    \int_{0}^{\phi_{1/2}} 
    \left[d_{\max}(\theta)\right]^3
    \sin(\theta)\,\mathrm{d}\theta.
    \label{eq:vsuccess_snr}
\end{align}
This volume is a critical parameter, as it quantifies the reach of a single discovery attempt.
\textcolor{blue}{Physically, $V_{\text{success}}$ is the effective 3D search volume swept by one optical scan; it integrates the angle-dependent discovery boundary over the beam solid angle rather than assuming a fixed spherical range.}
\end{corollary}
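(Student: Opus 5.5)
The plan is to characterize the detectable region as a star-shaped set in spherical coordinates centred at the transmitter, with the beam axis as polar axis, and then compute its Lebesgue volume directly.

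\textbf{Step 1: the detectable region.} By Proposition~\ref{prop:dmax_snr}, a receiver at polar angle $\theta\in[0,\phi_{1/2}]$ (any azimuth $\varphi$) and range $\rho$ is discoverable exactly when $\rho\le d_{\max}(\theta)$. Justifying this requires monotonicity: the function $h(\rho)=\rho^2e^{c(\lambda)\rho}$ is strictly increasing on $[0,\infty)$. Hence the condition $P_{Tx}H_{\mathrm{channel}}\bar I_f\ge P_{\text{th,disc}}$, i.e.\ $h(\rho)\le K_{\mathrm{SNR}}\cos^{m+1}\theta$, holds if and only if $\rho\le d_{\max}(\theta)$. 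Since $W_0$ is increasing on $[0,\infty)$, $d_{\max}(\theta)$ is well defined and finite for every $\theta$ in the beam. The region is therefore $\{(\rho,\theta,\varphi):0\le\varphi<2\pi,\ 0\le\theta\le\phi_{1/2},\ 0\le\rho\le d_{\max}(\theta)\}$. Axial symmetry follows because neither $H_{\mathrm{channel}}$ nor the alignment assumption $\psi=\theta$ depends on $\varphi$.

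\textbf{Step 2: the triple integral.} Using the spherical volume element $\rho^2\sin\theta\,\mathrm{d}\rho\,\mathrm{d}\theta\,\mathrm{d}\varphi$, the volume of this region is exactly the stated triple integral. The integrand is nonnegative and measurable, since $d_{\max}$ is continuous as a composition of continuous functions. Fubini's theorem therefore permits iterated integration in any order.

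\textbf{Step 3: reduction to one dimension.} Integrating first in $\rho$ gives $\int_0^{d_{\max}(\theta)}\rho^2\,\mathrm{d}\rho=d_{\max}(\theta)^3/3$. The remaining integrand is then independent of $\varphi$, so the $\varphi$-integral contributes a factor $2\pi$. This yields \eqref{eq:vsuccess_snr}.

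The only genuinely delicate point is Step~1, namely that the set of detectable ranges along each ray is an interval starting at the origin. I would handle it with the monotonicity of $\rho^2e^{c\rho}$ as above. One should also note that truncating the angular integral at $\phi_{1/2}$ is a modelling choice, the definition of a single scan, rather than a consequence of $d_{\max}$ vanishing. Similarly, any FoV cutoff with $\phi_{\text{FoV}}<\phi_{1/2}$ is absorbed by $g(\theta)=0$, which gives $K_{\mathrm{SNR}}=0$ and hence $d_{\max}=0$ there.
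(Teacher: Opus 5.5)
Your proposal is correct and takes the paper's route: write the detectable region in spherical coordinates with radial limit $d_{\max}(\theta)$, integrate out $\rho$ to get $d_{\max}(\theta)^3/3$, and use azimuthal symmetry to obtain the factor $2\pi$. Your monotonicity argument for $\rho^2 e^{c(\lambda)\rho}$ usefully makes explicit the star-shapedness of the region, which the paper assumes without comment; one small correction is that $g$ is a step function, so when $\phi_{\text{FoV}}<\phi_{1/2}$ the function $d_{\max}$ is only piecewise continuous, not continuous, but it remains measurable, which is all Tonelli's theorem requires.
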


\begin{figure}[!t]
    \centering

    \includegraphics[width=\columnwidth]{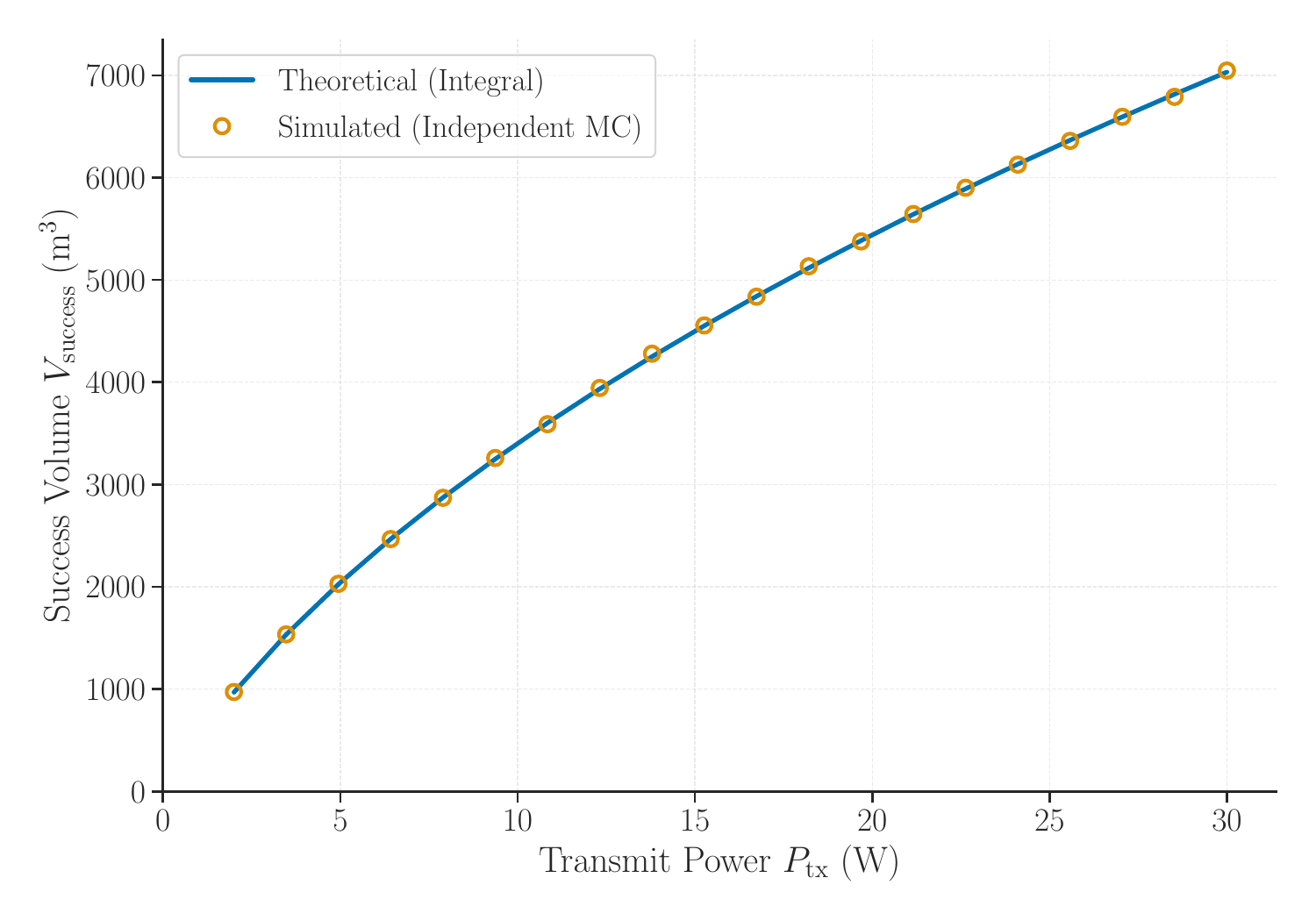}

    \caption{Effective success volume $V_{\text{success}}$ as a function of transmitter power $P_{Tx}$, validating the integral expression \eqref{eq:vsuccess_snr} against simulation results.}
    \label{fig:verification_V_success}
\end{figure}

\textcolor{blue}{Fig.~\ref{fig:verification_V_success} compares the integral expression in \eqref{eq:vsuccess_snr} with Monte Carlo simulations over different transmit-power values. The close agreement between the analytical curve and the simulated points indicates that the success-volume calculation captures the geometry of the detectable region and the SNR-based discovery boundary used in the simulations. This validation is important because \(V_{\text{success}}\) is the input to the subsequent PPP-based discovery probability and search-time analysis.}

\subsection{Discovery Probabilities and Search Metrics}
\textcolor{blue}{This subsection connects the success volume to stochastic search performance under the PPP node model. The derived metrics quantify the probability of discovering at least one node and the expected time required for the first successful discovery.}
Based on the PPP assumption and the derived success volume, we can characterize the stochastic nature of the discovery process.
\textcolor{blue}{Because the nodes are modeled as a PPP, the probability of discovery is governed by the expected number of nodes falling inside $V_{\text{success}}$. The expected search time then follows from repeated independent scans, where each scan succeeds with probability $p_s$.}

\begin{theorem}[Discovery Performance Metrics]
\label{thm:discovery_perf_snr}
For a receiver density $\lambda_{\text{node}}$ and a scan duration $T_{\text{dwell}}$, the key discovery metrics are
\begin{enumerate}
    \item The probability of finding at least one node in a single scan is
    \begin{align}
        p_s = 1 - \exp(-\lambda_{\text{node}}V_{\text{success}}).
        \label{eq:ps_snr}
    \end{align}
    \item The expected search time to discover the first node is
    \begin{align}
        E[T_{\text{search}}]
        = \frac{T_{\text{dwell}}}{p_s}
        = \frac{T_{\text{dwell}}}{1 - e^{-\lambda_{\text{node}}V_{\text{success}}}}.
        \label{eq:tsearch_snr}
    \end{align}
    \item The overall probability of finding a node after $N$ non-overlapping scans covering the full sphere is
    \begin{align}
        P_{\text{success}}
        = 1 - \exp(-N\lambda_{\text{node}}V_{\text{success}}),
        \label{eq:psuccess_snr}
    \end{align}
    where $N \approx 2 / (1 - \cos(\phi_{1/2}))$.
\end{enumerate}
\end{theorem}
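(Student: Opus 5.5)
The three items follow from the void probability of the homogeneous 3D PPP combined with Corollary~\ref{cor:vsuccess_snr}, together with a geometric-trial argument for the search time. The main modelling assumption is that distinct scans interrogate disjoint regions, or equivalently that the node field is independently resampled between dwell periods, as the PPP snapshot interpretation in Section~\ref{sec:system_model} allows.

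\textbf{Item 1.} By Proposition~\ref{prop:dmax_snr}, the detectable set of one scan is the bounded Borel region
\[
\mathcal{A}=\{(\rho,\theta,\varphi):\theta\le\phi_{1/2},\ \rho\le d_{\max}(\theta)\}.
\]
A node is discovered exactly when it lies in $\mathcal{A}$, because the detection condition $P_{Tx}H_{\mathrm{fad}}>P_{\text{th,disc}}$ is monotone decreasing in $d$ for fixed $\theta$. The Lebesgue measure of $\mathcal{A}$ is $V_{\text{success}}$ by \eqref{eq:vsuccess_snr}. For a homogeneous PPP, the count $N(\mathcal{A})$ is Poisson with mean $\lambda_{\text{node}}V_{\text{success}}$. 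Hence the void probability is $\Pr\{N(\mathcal{A})=0\}=e^{-\lambda_{\text{node}}V_{\text{success}}}$, and taking the complement gives \eqref{eq:ps_snr}.

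\textbf{Item 2.} Treat successive scans as independent Bernoulli trials, each succeeding with probability $p_s$ and each lasting $T_{\text{dwell}}$. Independence follows from the independent-scattering property of the PPP over non-overlapping scan regions. The index $K$ of the first successful scan is then geometric, so $E[K]=\sum_{k\ge1}k(1-p_s)^{k-1}p_s=1/p_s$. Since $T_{\text{search}}=K\,T_{\text{dwell}}$, this gives \eqref{eq:tsearch_snr}. Here I neglect slew and approach time between scans, taking $T_{\text{dwell}}$ as the full per-scan time.

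\textbf{Item 3.} For $N$ non-overlapping scan cones $\mathcal{A}_1,\dots,\mathcal{A}_N$, the counts $N(\mathcal{A}_i)$ are independent. Therefore
\[
\Pr\{\text{no discovery}\}=\prod_i e^{-\lambda_{\text{node}}V_{\text{success}}}=e^{-N\lambda_{\text{node}}V_{\text{success}}}.
\]
Alternatively, the union $\bigcup_i\mathcal{A}_i$ has volume $NV_{\text{success}}$, which gives the same result. The expression for $N$ comes from dividing the full solid angle $4\pi$ by the solid angle of one beam cone, $2\pi(1-\cos\phi_{1/2})$. This yields $N\approx 2/(1-\cos\phi_{1/2})$, with "$\approx$" acknowledging that cones cannot tile the sphere exactly.

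The main obstacle is justifying independence and non-overlap: circular cones cannot partition the sphere without gaps or overlaps. I would state Item 3 under an idealized tiling assumption, with $N$ rounded, and note that overlaps make the expression an upper bound on the covered volume.
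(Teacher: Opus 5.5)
Your proposal is correct and follows essentially the same route as the paper. Both use the PPP void probability over the success volume for $p_s$, a geometric number of independent scans for $E[T_{\text{search}}]$, and independence of non-overlapping scans for $P_{\text{success}}$, with $N$ obtained from the solid-angle ratio $4\pi/[2\pi(1-\cos\phi_{1/2})]$. Your added caveats (independent resampling of the node field between scans, so the geometric argument holds beyond $N$ scans, and imperfect tiling of the sphere by cones) are slightly more careful than the paper's argument but do not change the result.
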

\begin{proof}
The proof relies on fundamental properties of the PPP. The number of nodes in a volume $V$ follows a Poisson distribution with mean $\lambda_{\text{node}} V$. The probability of finding zero nodes (the void probability) in the success volume $V_{\text{success}}$ is $\exp(-\lambda_{\text{node}} V_{\text{success}})$. The probability of finding at least one node is the complement, thereby proving (\ref{eq:ps_snr}). The search for a node can be modeled as a sequence of independent Bernoulli trials, where each scan is a trial with success probability $p_s$. The number of trials until the first success follows a geometric distribution with mean $1/p_s$. As each trial has a duration of $T_{\text{dwell}}$, the expected total search time is $T_{\text{dwell}}/p_s$, proving (\ref{eq:tsearch_snr}). The probability of failing to find a node in one scan is $1-p_s = \exp(-\lambda_{\text{node}} V_{\text{success}})$. The probability of failing in all $N$ independent scans is $(1-p_s)^N = \exp(-N \lambda_{\text{node}} V_{\text{success}})$. The overall success probability is its complement, proving (\ref{eq:psuccess_snr}).
\textcolor{blue}{Equivalently, if $N_{\mathrm{succ}}$ denotes the number of nodes in one success volume, then $N_{\mathrm{succ}}\sim\mathrm{Poisson}(\lambda_{\text{node}}V_{\text{success}})$ and}
\begin{align}
\textcolor{blue}{
p_s=P(N_{\mathrm{succ}}\ge1)=1-P(N_{\mathrm{succ}}=0)
=1-e^{-\lambda_{\text{node}}V_{\text{success}}}.}
\end{align}
\textcolor{blue}{Let $M$ be the number of scans needed until the first successful scan. Under the non-overlapping-scan assumption used in the theorem, $M$ is geometrically distributed with parameter $p_s$, so $E[M]=1/p_s$ and $E[T_{\text{search}}]=T_{\text{dwell}}E[M]$. For the full-sphere scanning case, the solid angle of one conical scan is $\Omega_b=2\pi(1-\cos\phi_{1/2})$, while the sphere has solid angle $4\pi$, giving $N\approx4\pi/\Omega_b=2/(1-\cos\phi_{1/2})$.}
\end{proof}

\begin{figure}[!t]
    \centering

    \includegraphics[width=\columnwidth]{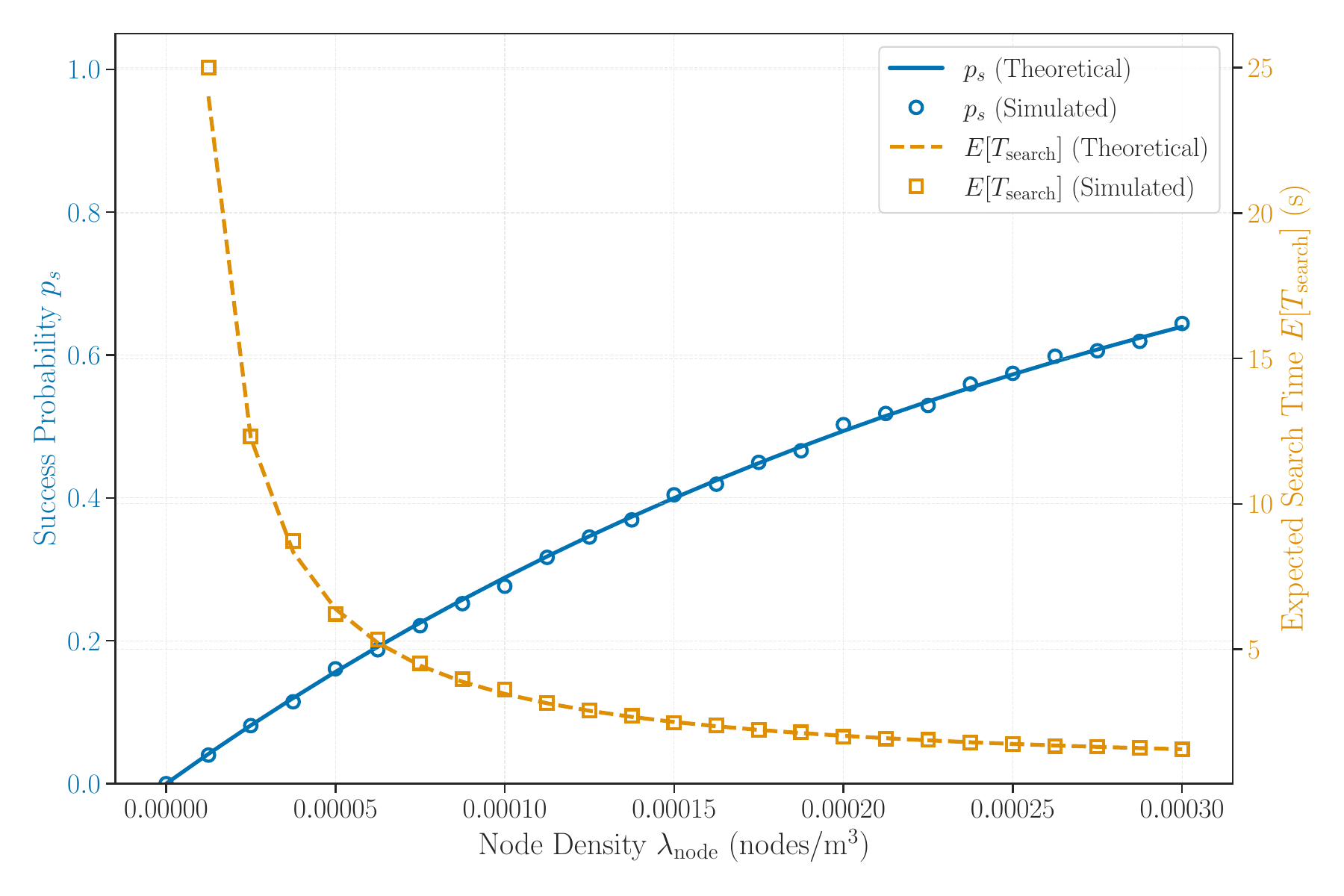}

    \caption{Discovery metrics ($p_s$ and $E[T_{\text{search}}]$) versus node density $\lambda_{\text{node}}$, showing agreement between theory \eqref{eq:ps_snr}, \eqref{eq:tsearch_snr} and simulation.}
    \label{fig:verification_discovery_metrics}
\end{figure}

\textcolor{blue}{Fig.~\ref{fig:verification_discovery_metrics} further validates the PPP-based expressions in \eqref{eq:ps_snr} and \eqref{eq:tsearch_snr}. As the node density increases, the probability of discovering at least one node in a scan increases, while the expected search time decreases. The Monte Carlo results follow the same trend as the analytical expressions, supporting the use of the derived success volume within the stochastic discovery model.}

\subsection{Distance Distribution of the First Discovered Node}
\textcolor{blue}{This subsection refines the discovery analysis by characterizing where the first discovered node is likely to be located. The resulting distance distribution is useful for estimating the subsequent movement and service cost of the AUV.}
Finally, we characterize the random distance $R$ to the first node that is successfully discovered.

\begin{lemma}[Detectable Volume Within Radius $r$]
\label{lemma:Vr_snr}
The volume of the region where a node is both detectable and located within a distance $r$ from the transmitter, denoted $V(r)$, is given by
\begin{align}
V(r)
 = \frac{2\pi}{3}
 \int_{0}^{\phi_{1/2}}
 \min(r, d_{\max}(\theta))^3 
 \sin(\theta)\,\mathrm{d}\theta.
 \label{eq:Vr_snr}
\end{align}
\end{lemma}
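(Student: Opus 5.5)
We describe the set explicitly and integrate in spherical coordinates centred at the transmitter, with the polar axis along the beam axis. The region in question is
\[
\mathcal{D}(r)=\{(\rho,\theta,\varphi): 0\le\varphi<2\pi,\ 0\le\theta\le\phi_{1/2},\ 0\le\rho\le r,\ \rho\le d_{\max}(\theta)\}.
\]
The first constraint on $\rho$ is the distance condition. The second is the detectability condition, which is the same one that defined $V_{\text{success}}$ in Corollary~\ref{cor:vsuccess_snr}.

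\textbf{Step 1: the detectable set is radially star-shaped.} We must show that detectability along a fixed direction $\theta$ is exactly the interval $\rho\le d_{\max}(\theta)$.

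1. Under parallel alignment, the mean received power is $P_{Tx}\bar I_f H_{\text{channel}}(\rho,\theta,\theta)$. As a function of $\rho$, this is proportional to $\rho^{-2}e^{-c(\lambda)\rho}$.
2. Since $c(\lambda)>0$, this function is strictly decreasing in $\rho$.
3. Hence the inequality $P_{Tx}\bar I_f H_{\text{channel}}>P_{\text{th,disc}}$ holds precisely for $\rho<d_{\max}(\theta)$.
4. Equivalently, $\rho^2e^{c(\lambda)\rho}$ is increasing on $[0,\infty)$. Therefore $z\mapsto ze^z$ is increasing and $W_0$ inverts it uniquely, so the boundary in Proposition~\ref{prop:dmax_snr} is the unique crossing.

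The boundary itself has measure zero, so whether the inequality is strict does not affect the volume.

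\textbf{Step 2: combine the two radial conditions.} For fixed $(\theta,\varphi)$, the admissible radii form the interval $[0,\min(r,d_{\max}(\theta))]$. The volume then follows by Fubini:
\[
V(r)=\int_0^{2\pi}\!\int_0^{\phi_{1/2}}\!\int_0^{\min(r,d_{\max}(\theta))}\rho^2\sin\theta\,\mathrm{d}\rho\,\mathrm{d}\theta\,\mathrm{d}\varphi .
\]
The integrand does not depend on $\varphi$, so that integral contributes a factor $2\pi$. The radial integral gives $\min(r,d_{\max}(\theta))^3/3$. Together these produce \eqref{eq:Vr_snr}.

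\textbf{Step 3: consistency checks.}
- The map $r\mapsto V(r)$ is continuous and nondecreasing.
- $V(0)=0$.
- $V(r)=V_{\text{success}}$ once $r\ge\max_\theta d_{\max}(\theta)$, which recovers \eqref{eq:vsuccess_snr}.

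These properties are what the subsequent CDF construction $F_R(r)$ will need.

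\textbf{Main obstacle.} The only substantive point is Step 1. It requires that detectability in each direction is a single interval in $\rho$ rather than a union of pieces. With the deterministic mean-gain boundary adopted in Proposition~\ref{prop:dmax_snr}, monotonicity settles this immediately. If $T_s$ or $g$ made the gain vanish for some $\theta$ inside $[0,\phi_{1/2}]$, for example $\theta>\phi_{\text{FoV}}$, then the argument of $W_0$ would be zero there. In that case we set $d_{\max}(\theta)=0$, consistent with $W_0(0)=0$, and the formula continues to hold.
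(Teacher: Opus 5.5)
Your proposal is correct and follows the paper's proof. Both write the region in spherical coordinates about the transmitter, merge $\rho\le r$ and $\rho\le d_{\max}(\theta)$ into $\rho\le\min(r,d_{\max}(\theta))$, and integrate $\rho^2\sin\theta$ over $\varphi\in[0,2\pi]$ and $\theta\in[0,\phi_{1/2}]$; your Step~1 additionally uses the strict decrease of $\rho^{-2}e^{-c(\lambda)\rho}$ to show that detectability along each ray is exactly the interval $[0,d_{\max}(\theta)]$, which the paper simply assumes.
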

\begin{proof}
A node at spherical coordinates $(\rho, \theta, \varphi)$ is within distance $r$ and detectable if its radial coordinate satisfies both $\rho \le r$ and $\rho \le d_{\max}(\theta)$, which is equivalent to $\rho \le \min(r, d_{\max}(\theta))$. Integrating $\rho^2 \sin(\theta)$ over the volume defined by $0 \le \varphi \le 2\pi$, $0 \le \theta \le \phi_{1/2}$, and $0 \le \rho \le \min(r, d_{\max}(\theta))$ yields the result.
\end{proof}

\begin{theorem}[Distribution of the Distance to the Nearest Discoverable Node]
\label{thm:distance_pdf_snr}
\textcolor{blue}{Conditioned on a successful scan, \mbox{$N_{\mathrm{succ}}\geq1$}, let $R$ be the distance to the nearest discoverable node.} Its cumulative distribution function (CDF) and probability density function (PDF) are
\begin{align}
    \textcolor{blue}{F_R(r)} &= \textcolor{blue}{\frac{1 - \exp[-\lambda_{\text{node}}V(r)]}{p_s}}, \label{eq:cdf_snr}\\
    \textcolor{blue}{f_R(r)} &= \textcolor{blue}{\frac{\lambda_{\text{node}}}{p_s}}
    \frac{\mathrm{d}V(r)}{\mathrm{d}r}
    \textcolor{blue}{\exp[-\lambda_{\text{node}}V(r)]}, \label{eq:pdf_snr}
\end{align}
\textcolor{blue}{where $p_s=1-\exp(-\lambda_{\text{node}}V_{\text{success}})$ is the probability that the scan contains at least one discoverable node.} The derivative of the volume $V(r)$ is given by
\begin{align}
    \frac{\mathrm{d}V(r)}{\mathrm{d}r}
    = 2\pi r^2
    \left[
    1 - \cos\left(\min(\phi_{1/2},\theta_r)\right)
    \right],
    \label{eq:dvdr_snr}
\end{align}
and $\theta_r$ is the angle implicitly defined by the relation $d_{\max}(\theta_r)=r$. \textcolor{blue}{For $r\geq d_{\max}(0)$, $V(r)=V_{\text{success}}$, so $F_R(r)=1$ and $f_R(r)=0$.}
\end{theorem}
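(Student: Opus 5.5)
The plan is to combine the void probability of the 3D PPP, applied to the detectable region of radius $r$ from Lemma~\ref{lemma:Vr_snr}, with elementary conditioning on the event $\{N_{\mathrm{succ}}\ge1\}$, and then to differentiate $V(r)$ using the monotonicity of $d_{\max}(\theta)$ from Proposition~\ref{prop:dmax_snr}.

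\textbf{Step 1 (unconditional nearest-node law).} Let $R_0$ be the distance to the nearest node inside the success region, with $R_0=\infty$ if the region is empty. The event $\{R_0\le r\}$ coincides with the event that the region of Lemma~\ref{lemma:Vr_snr} contains at least one node. By the Poisson property used in the proof of Theorem~\ref{thm:discovery_perf_snr}, this gives $P(R_0\le r)=1-\exp[-\lambda_{\text{node}}V(r)]$. Since $V(r)\uparrow V_{\text{success}}$, we have $\{R_0\le r\}\subseteq\{N_{\mathrm{succ}}\ge1\}$ for every $r$. Hence
\[
F_R(r)=\frac{P(R_0\le r)}{P(N_{\mathrm{succ}}\ge1)}=\frac{1-\exp[-\lambda_{\text{node}}V(r)]}{p_s},
\]
which is \eqref{eq:cdf_snr}. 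Differentiating by the chain rule then yields \eqref{eq:pdf_snr}, provided $V$ is differentiable. For $r\ge d_{\max}(0)$, the integrand $\min(r,d_{\max}(\theta))=d_{\max}(\theta)$ for all $\theta$, so $V(r)=V_{\text{success}}$. It follows that $F_R=1$ and $f_R=0$ there.

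\textbf{Step 2 (monotonicity of the boundary).} I will show that $d_{\max}(\theta)$ is continuous and strictly decreasing on $[0,\phi_{1/2}]$. The argument of $W_0$ in \eqref{eq:dmax_snr_result} is proportional to $\sqrt{\cos^{m+1}\theta}$, which is decreasing, and $W_0$ is increasing on $[0,\infty)$. I will treat $T_s(\theta)g(\theta)$ as constant within the FoV, which is the standing assumption that the receiver lies inside its FoV. Monotonicity then implies that, for $d_{\max}(\phi_{1/2})<r<d_{\max}(0)$, the equation $d_{\max}(\theta_r)=r$ has a unique solution $\theta_r\in(0,\phi_{1/2})$. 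It also gives $\min(r,d_{\max}(\theta))=r$ exactly for $\theta\le\theta_r$. For $r\le d_{\max}(\phi_{1/2})$, I adopt the convention $\theta_r\ge\phi_{1/2}$, so that $\min(\phi_{1/2},\theta_r)=\phi_{1/2}$.

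\textbf{Step 3 (derivative of $V$).} Write $\theta^*=\min(\phi_{1/2},\theta_r)$ and split the integral accordingly:
\[
V(r)=\frac{2\pi}{3}\Big[r^3(1-\cos\theta^*)+\int_{\theta^*}^{\phi_{1/2}}d_{\max}(\theta)^3\sin\theta\,\mathrm{d}\theta\Big].
\]
Apply Leibniz's rule. The product rule on the first term gives $2\pi r^2(1-\cos\theta^*)$ plus a term $\tfrac{2\pi}{3}r^3\sin\theta^*\,\tfrac{\mathrm{d}\theta^*}{\mathrm{d}r}$. The moving lower limit of the integral contributes $-\tfrac{2\pi}{3}d_{\max}(\theta^*)^3\sin\theta^*\,\tfrac{\mathrm{d}\theta^*}{\mathrm{d}r}$. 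These two terms cancel because $d_{\max}(\theta_r)=r$, and in the regime $\theta^*=\phi_{1/2}$ we simply have $\mathrm{d}\theta^*/\mathrm{d}r=0$. The result is \eqref{eq:dvdr_snr}.

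The main obstacle is justifying the differentiability of $\theta_r$ in $r$. I plan to obtain it from the inverse function theorem, since $d_{\max}$ is $C^1$ with a nonzero derivative on the open interval. At the two kinks $r=d_{\max}(\phi_{1/2})$ and $r=d_{\max}(0)$, $V$ is only continuous with matching one-sided derivatives, and these measure-zero points do not affect the density. As a cleaner alternative that avoids Leibniz entirely, I would note that $V(r)=\int_0^r 2\pi\rho^2\,[1-\cos\theta^*(\rho)]\,\mathrm{d}\rho$. This follows by integrating first in $\theta$ at each fixed radius $\rho$, where the admissible set is $\{\theta\le\phi_{1/2}:\ d_{\max}(\theta)\ge\rho\}=[0,\theta^*(\rho)]$, and the fundamental theorem of calculus then gives \eqref{eq:dvdr_snr} directly at every continuity point of $\theta^*$.
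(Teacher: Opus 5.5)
Your proposal is correct and follows essentially the same route as the paper. The CDF comes from the PPP void probability on the region of Lemma~\ref{lemma:Vr_snr}, divided by $p_s$ because $\{R\le r\}$ is contained in the successful-scan event, and $\mathrm{d}V/\mathrm{d}r$ comes from a Leibniz-rule computation that relies on $d_{\max}(\theta)$ being decreasing (which you justify explicitly, whereas the paper simply asserts it). The one technical difference is that the paper keeps the limits fixed and differentiates the integrand, $\partial_r\min(r,d_{\max}(\theta))^3=3r^2\mathbf{1}\{r<d_{\max}(\theta)\}$, so it never needs the differentiability of $\theta_r$ that your moving-limit split requires; your Fubini/shell alternative avoids that requirement in the same way.
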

\textcolor{blue}{The conditioning in \eqref{eq:cdf_snr}--\eqref{eq:pdf_snr} makes the distribution consistent with a completed discovery: it describes how far the first discoverable node is likely to be from the AUV, given that the current scan succeeds. This distance directly affects the subsequent travel and servicing cost.}
\begin{proof}
\textcolor{blue}{The event that at least one discoverable node lies within $V(r)$ is a subset of the successful-scan event. The PPP void probability therefore gives}
\begin{align*}
\textcolor{blue}{
P(R\leq r\mid N_{\mathrm{succ}}\geq1)
=\frac{1-e^{-\lambda_{\mathrm{node}}V(r)}}
{1-e^{-\lambda_{\mathrm{node}}V_{\text{success}}}},}
\end{align*}
\textcolor{blue}{which proves \eqref{eq:cdf_snr}; differentiation gives \eqref{eq:pdf_snr}.} To find $\frac{\mathrm{d}V(r)}{\mathrm{d}r}$, we apply the Leibniz integral rule to (\ref{eq:Vr_snr}), $\frac{\mathrm{d}}{\mathrm{d}r} \int_a^b g(x,r) \mathrm{d}x = \int_a^b \frac{\partial g}{\partial r} \mathrm{d}x$. The derivative of $\min(r, d_{\max}(\theta))^3$ with respect to $r$ is $3r^2$ only when $r < d_{\max}(\theta)$, and 0 otherwise. Since $d_{\max}(\theta)$ is a monotonically decreasing function of $\theta$, the condition $r < d_{\max}(\theta)$ is equivalent to $\theta < \theta_r$. The integral for the derivative thus becomes $2\pi r^2 \int_{0}^{\min(\phi_{1/2}, \theta_r)} \sin(\theta) \,\mathrm{d}\theta$, which evaluates to (\ref{eq:dvdr_snr}).
\textcolor{blue}{More explicitly, the radial upper limit in \eqref{eq:Vr_snr} can be differentiated as}
\begin{align}
\textcolor{blue}{
\frac{\partial}{\partial r}\min(r,d_{\max}(\theta))^3
=3r^2\mathbf{1}\{r<d_{\max}(\theta)\}.}
\end{align}
\textcolor{blue}{Because $d_{\max}(\theta)$ decreases with $\theta$ over the scanned beam, this indicator restricts the angular integration to $0\le\theta\le\min(\phi_{1/2},\theta_r)$. Substituting this indicator form into the derivative of $V(r)$ gives}
\begin{align}
\textcolor{blue}{\begin{aligned}
\frac{\mathrm{d}V(r)}{\mathrm{d}r}
&=2\pi r^2\int_0^{\min(\phi_{1/2},\theta_r)}
\sin\theta\,\mathrm{d}\theta \\
&=2\pi r^2\!\left[1-\cos\!\left(\min(\phi_{1/2},\theta_r)\right)\right],
\end{aligned}}
\end{align}
\textcolor{blue}{which proves \eqref{eq:dvdr_snr}.}
\end{proof}

\begin{figure}[!t]
    \centering

    \includegraphics[width=\columnwidth]{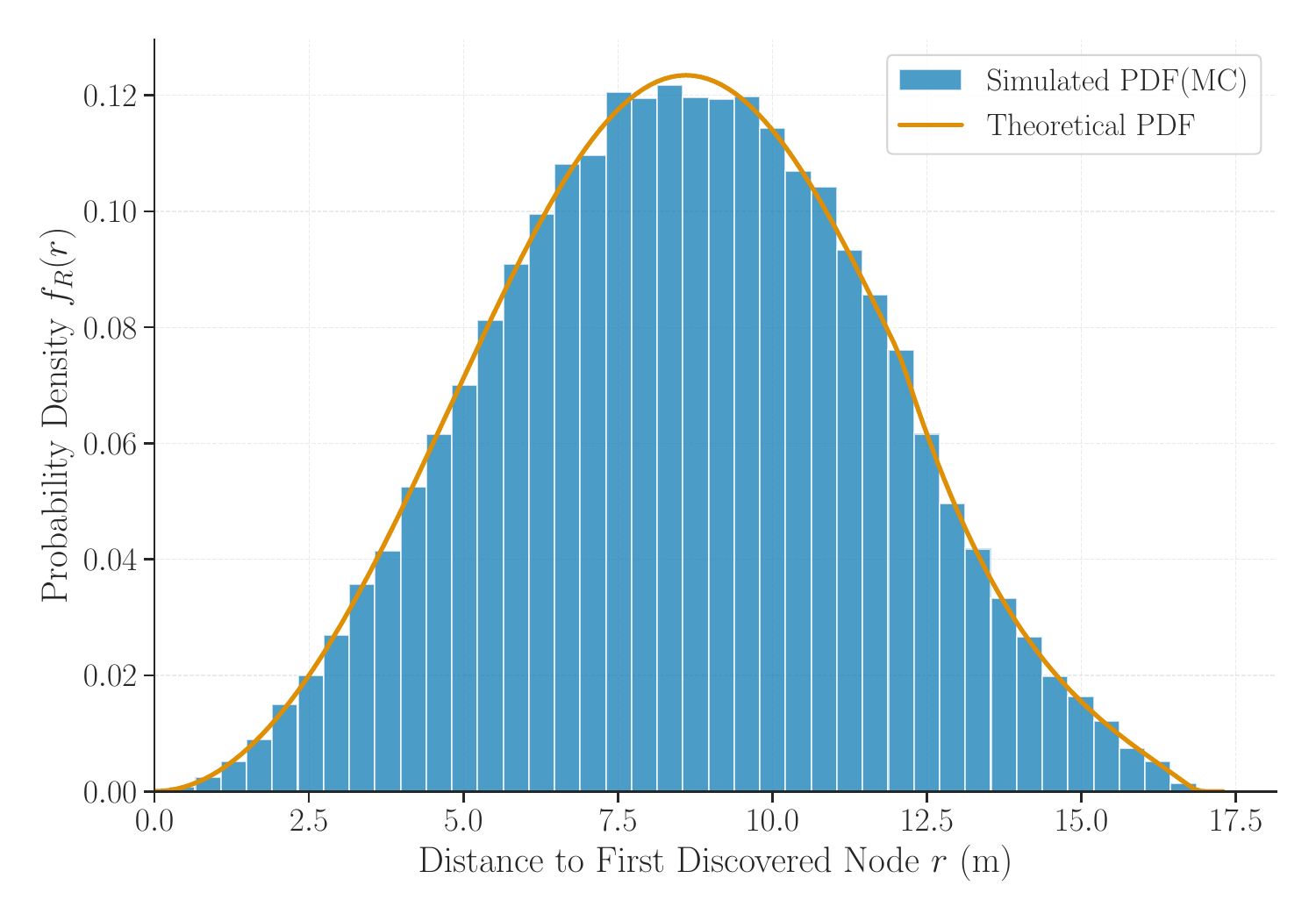}

    \caption{\textcolor{blue}{Conditional} PDF $f_R(r)$ of the distance to the first discovered node. The plot compares the analytical PDF (\eqref{eq:pdf_snr}) with the histogram from Monte Carlo simulations.}
    \label{fig:verification_distance_distribution_pdf}
\end{figure}

\textcolor{blue}{Fig.~\ref{fig:verification_distance_distribution_pdf} validates the conditional distance distribution derived in \eqref{eq:pdf_snr}. The analytical PDF is compared with the histogram obtained from successful Monte Carlo realizations of the PPP node field and the SNR-based discovery condition. Their agreement shows that the derived distribution captures the spatial location of the first discovered node once a scan succeeds, which is later used to characterize the service distance and mission-level cost.}

\begin{remark}[Mean Discovery Distance]
The expected distance to the nearest discoverable node, a key parameter for mission planning, \textcolor{blue}{can be evaluated as}
\begin{align}
\textcolor{blue}{E[R]
= \int_{0}^{d_{\max}(0)} r f_R(r) \mathrm{d}r
= \int_{0}^{d_{\max}(0)} (1 - F_R(r)) \mathrm{d}r},
\end{align}
which simplifies to
\begin{align}
\textcolor{blue}{\begin{aligned}
E[R]
&=\frac{1}{p_s}\int_{0}^{d_{\max}(0)}
\bigl\{\exp[-\lambda_{\text{node}}V(r)]\\[-1mm]
&\hspace{25mm}-\exp[-\lambda_{\text{node}}V_{\text{success}}]\bigr\}\,\mathrm{d}r.
\end{aligned}}
\label{eq:mean_distance_snr}
\end{align}
This integral generally lacks a closed form but is easily evaluated numerically. \textcolor{blue}{The subtraction of the failed-scan probability and division by $p_s$ follow from conditioning on $N_{\mathrm{succ}}\geq1$.}
\end{remark}

\section{State-Aware Optimal Point Servicing}
\label{sec:optimal_scheduling}

Upon discovering \textcolor{blue}{an} Rx at an initial distance $R$, the Tx initiates a mobile service phase and moves toward the Rx to deliver wireless information and power. The design of the servicing rule affects both the AUV energy expenditure and the residual energy of the network. A state-agnostic rule, such as always communicating before charging, may be inefficient when the discovered node lacks sufficient energy to support the communication task. \textcolor{blue}{This motivates a state-aware rule that uses the node's reported residual energy to select the subsequent service action.}

In this section, we first compute the transmit power required for a given service task under the physical-layer constraints. Building on this calculation, we introduce the State-Aware Optimal Point Servicing (SA-OPS) policy, \textcolor{blue}{a threshold-based rule} that combines service-location selection with residual-energy feedback.

\subsection{\textcolor{blue}{Physical-Layer Communication Power and Beam Selection}}
\textcolor{blue}{This subsection separates communication-power control from rated-power charging and then selects the beam order used by both tasks. The communication link uses the minimum power required by its received-power threshold, whereas WPT uses the independent rated charging power defined in the service model.}
\label{subsec:phy_opt}

Let $P_{\text{th,com}}$ denote the minimum average received power required for WIT. The relationship between the communication transmit power $P_{\text{Tx,com}}(t)$ and average received power is
\begin{align}
    E[P_{\text{Rx}}(t)] = P_{\text{Tx,com}}(t) \cdot E[H_{\text{move}}(t)], \nonumber
\end{align}
where the expectation is taken over the pointing-error distribution. The minimum communication power satisfying $E[P_{\text{Rx}}(t)]\ge P_{\text{th,com}}$ is
\begin{align}
    P_{\text{Tx,com}}^*(t) = \frac{P_{\text{th,com}}}{E[H_{\text{move}}(t)]}.
    \label{eq:p_tx_star_definition}
\end{align}
From \eqref{eq:p_tx_star_definition}, minimizing communication power is equivalent to maximizing $E[H_{\text{move}}(t)]$. \textcolor{blue}{This is a communication link-budget inversion, not the WPT power rule: WPT uses $P_{\mathrm{Tx,WPT}}$ and benefits from the same gain-maximizing beam order through a higher received charging power.}

{\color{blue}
\begin{proposition}[Hardware-Constrained Lambertian Order under General 2D Pointing Jitter]
\label{prop:optimal_m_2d_full}
Let $\theta_x\sim\mathcal{N}(0,\sigma_x^2)$ and $\theta_y\sim\mathcal{N}(0,\sigma_y^2)$ be independent tangent-plane pointing components, with $\tan\theta=\sqrt{\theta_x^2+\theta_y^2}$. The normalized average gain factor is
\begin{align}
P(m;\sigma_x,\sigma_y)
&=\frac{m+1}{2\pi\sigma_x\sigma_y}
\int_{-\infty}^{\infty}\!\int_{-\infty}^{\infty}\nonumber\\[-1mm]
&\quad\times
\frac{e^{-x^2/(2\sigma_x^2)-y^2/(2\sigma_y^2)}}
{(1+x^2+y^2)^{m/2}}\,\mathrm{d}x\mathrm{d}y.
\label{eq:general_anisotropic_gain}
\end{align}
The implemented order is selected directly from the realizable set,
\begin{align}
\widehat m^*=\arg\max_{1\le m\le m_{\max}}P(m;\sigma_x,\sigma_y).
\label{eq:implemented_mstar}
\end{align}
For the isotropic special case $\sigma_x=\sigma_y=\sigma$, \eqref{eq:general_anisotropic_gain} reduces to
\begin{align}
P(m;\sigma,\sigma)
=\frac{m+1}{\sigma^2}\int_0^{\infty}
u(1+u^2)^{-m/2}
\exp\!\left(-\frac{u^2}{2\sigma^2}\right)\mathrm{d}u.
\label{eq:isotropic_gain_integral}
\end{align}
An interior stationary point, when it exists, satisfies
\begin{equation}
\begin{aligned}
\int_0^{\infty}
&u\left(1-\frac{m+1}{2}\ln(1+u^2)\right)(1+u^2)^{-m/2}\\
&\times\exp\!\left(-\frac{u^2}{2\sigma^2}\right)\mathrm{d}u=0.
\end{aligned}
\label{eq:implicit_mstar_final}
\end{equation}
Under the small-angle approximation, the general anisotropic gain becomes
\begin{align}
P(m;\sigma_x,\sigma_y)
\approx\frac{m+1}{\sqrt{(1+m\sigma_x^2)(1+m\sigma_y^2)}}.
\label{eq:anisotropic_gain_approx}
\end{align}
Over the benchmark hardware range, \eqref{eq:anisotropic_gain_approx} increases monotonically with $m$. It therefore provides a compact approximation of the gain trend but no interior optimum in this regime; the implemented order is selected by the constrained exact evaluation in \eqref{eq:implemented_mstar}.

For the benchmark aggregate jitter $\sigma_x^2+\sigma_y^2=0.02$ and ratios $\sigma_y/\sigma_x=1.1$, $1.3$, and $1.5$, numerical Gaussian quadrature gives relative gain changes of $0.19\%$, $1.40\%$, and $3.24\%$, respectively, compared with the isotropic case at $m=185$. In all three cases, \eqref{eq:implemented_mstar} remains hardware-limited at $\widehat m^*=185$.

\begin{proof}
The identity $\cos^m\theta=(1+\theta_x^2+\theta_y^2)^{-m/2}$ and the joint Gaussian density give \eqref{eq:general_anisotropic_gain}. Setting $\sigma_x=\sigma_y=\sigma$ and transforming to polar coordinates gives \eqref{eq:isotropic_gain_integral}; differentiation under the integral sign yields \eqref{eq:implicit_mstar_final}. For small jitter, $\cos^m\theta\approx\exp[-m(\theta_x^2+\theta_y^2)/2]$. Independence then factorizes the expectation, and $\mathbb{E}[e^{-m\theta_j^2/2}]=(1+m\sigma_j^2)^{-1/2}$ yields \eqref{eq:anisotropic_gain_approx}. The constrained order and reported gain changes are obtained by evaluating \eqref{eq:general_anisotropic_gain} over $1\le m\le185$ while holding $\sigma_x^2+\sigma_y^2$ fixed.
\end{proof}
\end{proposition}
}

\subsection{The SA-OPS Policy}
\textcolor{blue}{This subsection combines the physical-layer power calculation with residual-energy feedback to form the SA-OPS decision rule. The policy is organized as a small set of threshold-based cases that determine whether the AUV should charge before communication, communicate and then replenish, or communicate only.}
\label{subsec:saops_policy}

We now introduce the SA-OPS policy. The rule is organized around \textcolor{blue}{three state-dependent threshold choices, following the general resource-management principle of threshold control} \cite{scarf1960optimality}:
\begin{enumerate}
    \item \textcolor{blue}{Service location.} High-power tasks are executed at the point of closest approach ($d_{\text{min}}$), where the channel gain is high.
    \item \textcolor{blue}{Physical-layer beam selection.} The beam configuration ($m^*$) is selected to maximize the expected link gain, thereby reducing the required communication power and increasing the received charging power, as derived in Proposition \ref{prop:optimal_m_2d_full}.
    \item \textcolor{blue}{State-aware logic.} The node's reported residual energy is used to select the service action.
\end{enumerate}
\textcolor{blue}{SA-OPS couples the stochastic discovery analysis with a low-complexity local service rule driven by the residual-energy report of the encountered node. The healthy-energy threshold is calibrated offline from sample-averaged finite-mission KPIs, after which online execution follows a three-branch decision.}
\textcolor{blue}{The derivation of discovery statistics and physical-layer settings is performed offline. Once the task-specific transmit powers and realizable Lambertian order have been obtained, each encounter requires one residual-energy report and at most two scalar threshold comparisons, giving an online complexity of \(\mathcal{O}(1)\) per encountered node. The term optimal point in SA-OPS denotes the closest feasible service point at which the link-budget-compliant \emph{communication} power is minimized within the adopted local geometry.}
SA-OPS divides the service encounter into three sequential phases.

Phase 1 (Guided Approach and Tracking)
Upon initial discovery at distance $R$, the AUV transits towards the Rx. During this approach, it emits periodic, low-power optical "pings". The primary purpose of these pings is not data transfer but active tracking and trajectory refinement. The node's response allows the AUV to continuously verify its heading, ensuring it reaches the intended optimal servicing point at $d_{\text{min}}$.

Phase 2 (State Exchange at Closest Approach)
The AUV continues its guided approach until it reaches the pre-determined point of closest approach, $d_{\text{min}}$. \textcolor{blue}{At this location, where the link-budget-compliant communication power $P_{\text{Tx,com}}^*(d_{\text{min}})$ is minimized within the adopted local geometry, the AUV initiates an acknowledged state-exchange handshake.} It sends a query, and the node responds with its current residual energy, $E_{\text{res}}$.
\textcolor{blue}{The state exchange is modeled as a short acknowledged control exchange, separate from payload WIT. If the encoded query, state report, and acknowledgements contain a total of $L_{\mathrm{hs}}$ bits, their optical airtime is approximately $t_{\mathrm{hs}}=L_{\mathrm{hs}}/R_b$. At the benchmark rate $R_b=1$ Mbps, even $L_{\mathrm{hs}}=10^3$ bits gives $t_{\mathrm{hs}}=1$ ms, four orders of magnitude below the 10-s payload interval and also below the charging intervals, which are typically hundreds of seconds in the simulated service cases. The mission-level benchmark therefore omits handshake airtime, optical-mode switching, and processing energy. For measured hardware values, a policy serving $N_{\mathrm{svc}}$ nodes has the corrections $\Delta T_{\mathrm{ctrl}}=N_{\mathrm{svc}}(t_{\mathrm{sw}}+t_{\mathrm{hs}})$ and $\Delta E_{\mathrm{ctrl}}=N_{\mathrm{svc}}[E_{\mathrm{proc}}+P_{\text{AUV,Platform}}(t_{\mathrm{sw}}+t_{\mathrm{hs}})]$. Here, $t_{\mathrm{sw}}$ is the total optical-mode switching latency per serviced node, and $E_{\mathrm{proc}}$ is the corresponding handshake and control-processing energy. These corrections become relevant when their cumulative contribution is comparable to the payload and charging costs. High-rate short-packet UOWC and retransmission mechanisms are reported in \cite{lu201960m,11145354,nguyen2021probing}.}
\textcolor{blue}{The decision rule assumes a successful state exchange after the acknowledgement/retransmission layer. After a bounded number of unacknowledged attempts, a conservative implementation either skips the node or executes the charge-before-communicate branch when the remaining AUV budget permits. A protocol-level extension can incorporate the handshake failure probability into the SA-OPS decision statistics.}

Phase 3 (State-Aware Execution at \texorpdfstring{$d_{\text{min}}$}{Phase 3})
While hovering at or near $d_{\text{min}}$, the AUV executes a decision logic based on the reported $E_{\text{res}}$. The policy, formally defined in Algorithm \ref{alg:saops}, uses two energy thresholds, $E_{\text{comm}}$ (minimum energy to complete the communication task) and $E_{\text{healthy}}$ (target energy for long-term network sustainability).
\textcolor{blue}{The underlying mechanism is a state-dependent allocation of the finite AUV energy budget. The AUV's remaining energy is shared across future encounters: attempting WIT at a critically depleted node may fail, whereas charging an already healthy node provides little marginal mission benefit and reduces the energy available for later service. Accordingly, $E_{\text{comm}}$ acts as a communication-feasibility boundary, while $E_{\text{healthy}}$ limits per-node replenishment once the desired energy condition has been reached. By spending WPT energy only when it restores communication feasibility or materially improves node energy health, SA-OPS reduces avoidable AUV expenditure and reserves energy for subsequent encounters. This converts local residual-energy feedback into a mission-level tradeoff between individual-node recovery and network-wide service coverage, with the objective of slowing network energy deterioration under a fixed AUV budget. Specifically, if $E_{\text{res}}<E_{\text{comm}}$, SA-OPS first charges the node to communication feasibility, performs WIT, and then replenishes it toward $E_{\text{healthy}}$. If $E_{\text{comm}}\le E_{\text{res}}<E_{\text{healthy}}$, the node can already support communication, so SA-OPS completes WIT first and then replenishes it toward the healthy threshold. If $E_{\text{res}}\ge E_{\text{healthy}}$, additional charging is unnecessary for the current mission objective, so SA-OPS communicates only and preserves AUV energy for later encounters.}

{\color{black}
\begin{algorithm}[!ht]
\caption{The SA-OPS Policy}
\label{alg:saops}
\begin{algorithmic}[1]
\State \textbf{Input:} Closest approach distance $d_{\text{min}}$, required communication duration $t_{com}$, energy thresholds $E_{\text{comm}}, E_{\text{healthy}}$.
\Statex
\State \textbf{Phase 1 \& 2: Approach and State Exchange}
\State AUV moves to $d_{\text{min}}$ using tracking pings.
\State At $d_{\text{min}}$, AUV queries node and receives residual energy $E_{\text{res}}$.
\Statex
\State \textbf{Phase 3: State-Aware Execution at $d_{\text{min}}$}
\If{$E_{\text{res}} < E_{\text{comm}}$} \Comment{Case 1: Critically Low Energy}
    \State \textbf{Policy: Charge-then-Communicate}
    \State CHARGE until node energy reaches $E_{\text{comm}}$, using rated $P_{\text{Tx,WPT}}$.
    \State COMMUNICATE for duration $t_{com}$, using $P_{\text{Tx,com}}^*(d_{\text{min}})$.
    \State CHARGE toward $E_{\text{healthy}}$ (or until AUV budget depletion), using rated $P_{\text{Tx,WPT}}$.
\ElsIf{$E_{\text{comm}} \le E_{\text{res}} < E_{\text{healthy}}$} \Comment{Case 2: Sub-Optimal Energy}
    \State \textbf{Policy: Communicate-then-Charge}
    \State COMMUNICATE for duration $t_{com}$, using $P_{\text{Tx,com}}^*(d_{\text{min}})$.
    \State CHARGE until node energy reaches $E_{\text{healthy}}$ (or AUV budget is depleted), using rated $P_{\text{Tx,WPT}}$.
\Else \Comment{Case 3: Healthy Energy ($E_{\text{res}} \ge E_{\text{healthy}}$)}
    \State \textbf{Policy: Communicate-Only}
    \State COMMUNICATE for duration $t_{com}$, using $P_{\text{Tx,com}}^*(d_{\text{min}})$.
\EndIf
\State Mission for this node is complete.
\end{algorithmic}
\end{algorithm}
}

The SA-OPS policy provides three practical effects in the considered local service model:
\begin{enumerate}
    \item \textcolor{blue}{Critical-node support.} It first restores communication feasibility for critically low-energy nodes and, after WIT, continues replenishment toward the healthy threshold (Case 1).
    \item \textcolor{blue}{Network-energy maintenance.} It replenishes intermediate-energy nodes to a target healthy state (Case 2), improving the post-service energy condition of serviced nodes.
    \item \textcolor{blue}{AUV energy conservation.} It avoids unnecessary charging of already-healthy nodes (Case 3), preserving the AUV's limited energy budget for later service encounters.
\end{enumerate}
\textcolor{blue}{Thus, SA-OPS is a local state-aware servicing rule that combines physical-layer power calculation, closest-approach servicing, and residual-energy feedback under the assumptions of the present model.}

\begin{table}[!t]
\centering
\caption{SIMULATION PARAMETERS}
\label{tab:simulation_parameters}
\begin{tabularx}{\linewidth}{X c r}
\toprule
\textbf{Parameter} & \textbf{Symbol} & \textbf{Value} \\
\midrule
\multicolumn{3}{l}{\textit{\textbf{AUV and Mission Parameters}}} \\
AUV velocity & $v$ & \SI{1.5}{m/s} \cite{TeledyneGaviaAUV_misc_2025} \\
Receiver node density & $\lambda_{\text{node}}$ & Varied \\
AUV total energy storage & \textcolor{blue}{$W_{\text{AUV,Storage}}$} & \SI{4.5}{kWh} \cite{TeledyneGaviaAUV_misc_2025} \\
AUV platform power & \textcolor{blue}{$P_{\text{AUV,Platform}}$} & \SI{187.5}{W} \cite{TeledyneGaviaAUV_misc_2025} \\
Dwell time per scan & \textcolor{blue}{$T_{\text{dwell}}$} & \SI{1}{s} \\
Proximity handshake distance & $d_{\text{min}}$ & \SI{1}{m} \\
Lambertian order & $m$ & $1-185$ \cite{7551212} \\
\textcolor{blue}{Service-stage jitter standard deviations} & \textcolor{blue}{$\sigma_x=\sigma_y$} & \textcolor{blue}{$0.1$ rad} \\
Operational depth & $L_{deep}$ & \SI{50}{m} \\
\midrule
\multicolumn{3}{l}{\textit{\textbf{Environmental and Noise Parameters \cite{saksvik2025sipm}}}} \\
Total beam attenuation coeff. & $c(\lambda)$ & \SI{0.151}{m^{-1}} \\
Sea-level solar radiation & $E_{\text{sun}}(z=0)$ & \SI{1000}{W/m^2} \\
Solar radiation attenuation coefficient & $\epsilon$ & \SI{0.2}{m^{-1}} \\
Reflectance of solar radiation & $\zeta_r$ & 1.25 \\
\midrule
\multicolumn{3}{l}{\textit{\textbf{Transmitter (Tx) Parameters}}} \\
LED optical power & $P_{Tx}$ & \SI{10}{W} \\
\textcolor{blue}{Rated WPT optical power} & $P_{\mathrm{Tx,WPT}}$ & \textcolor{blue}{\SI{100}{W}} \\
LED half-power angle & $\phi_{1/2}$ & \SI{60}{\degree} \\
LED wavelength & $\lambda$ & \SI{450}{nm} \\
\midrule
\multicolumn{3}{l}{\textit{\textbf{Receiver (Rx) \cite{saksvik2025sipm}}}} \\
Aperture diameter & $D$ & \SI{30}{cm} \\
Receiver FOV (half-angle) & $\phi_{FoV}$ & \SI{60}{\degree} \\
Bandpass filter window & $\Delta\lambda$ & \SI{50}{nm} \\
SiPM PDE & $\eta$ & 0.31 \\
SiPM gain & $G$ & \textcolor{blue}{$10^6$} \\
SiPM cross-talk & $P_{ct}$ & \SI{8}{\percent} \\
Dark current & $I_d$ & \SI{154}{nA} \\
Excess noise factor & $F$ & \textcolor{blue}{1.2} \\
Load resistance & $R_L$ & \SI{50}{\ohm} \\
Directional dependence factor & $L_f$ & 4 \\
Receiver bandwidth & $B$ & \SI{1}{MHz} \\
Data rate & $R_b$ & \SI{1}{Mbps} \\
\midrule
\multicolumn{3}{l}{\textit{\textbf{Thresholds}}}\\
\textcolor{blue}{SNR-derived discovery threshold} & $P_{\text{th,disc}}$ & \textcolor{blue}{\SI{34.8}{nW}} \\
Communication power threshold & $P_{\text{th,com}}$ & \SI{1.2}{\micro\watt} \\
\textcolor{blue}{Charging-circuit activation power} & $P_{\text{th,charge}}$ & \SI{400}{nW} \\
Energy harvesting efficiency & $\eta_{\text{EH}}$ & \SI{20}{\percent} \cite{rohr2023dive}\\
Node comm. power consumption & $P_{\text{com}}$ & \SI{4}{mW} \cite{11077794} \\
Node sleep power consumption & $P_{\text{sleep}}$ & \SI{80}{\micro W} \cite{11077794} \\
Node energy capacity & $W_{\text{node}}$ & \SI{11286}{J} \cite{rohr2023dive} \\
Min. communication energy  & $E_{\text{comm}}$ & \SI{40}{mJ} \\
\bottomrule
\end{tabularx}
\end{table}


\section{Simulation Results and Analysis}
\label{sec:simulation_results}

In this section, we present simulation results for the proposed SA-OPS policy. We first establish the simulation framework and define the key performance indicators (KPIs). We then examine the sensitivity to node density \textcolor{blue}{within the adopted model}, \textcolor{blue}{compare SA-OPS against four reference policies}, and study the selection of the healthy-energy threshold, $E_{\text{healthy}}$, using multi-criteria decision analysis.

\subsection{Simulation Framework and Performance Metrics}
\textcolor{blue}{This subsection describes the simulation setup used to evaluate the analytical and policy components. We summarize the scenario parameters, baseline strategies, and KPIs before presenting the numerical results.}
\label{sec:setup_and_metrics}

We conduct the performance evaluation through a detailed simulation framework designed to model the \textcolor{blue}{discovery-and-service lifecycle represented in this work}. The simulation parameters, summarized in Table~\ref{tab:simulation_parameters}, are grounded in established models and hardware specifications from prior underwater optical communication and autonomous system research \cite{saksvik2025sipm,rohr2023dive,smart2005underwater,11077794, TeledyneGaviaAUV_misc_2025,7551212}. In this framework, an AUV services \textcolor{blue}{nodes represented by PPP snapshots and treated as quasi-static during each local approach-and-service interval}; initial energy levels follow a truncated normal distribution ($\mu, \sigma$), and the AUV operates under a strict total energy budget. \textcolor{blue}{The platform is assumed to be trimmed near neutral buoyancy, so weight and buoyancy approximately balance and station keeping primarily counters current-induced and other hydrodynamic disturbances \cite{fossen2021handbook,vu2021station}. The Gavia platform-power parameter in Table~\ref{tab:simulation_parameters} is consequently used as a manufacturer-based aggregate operating-power term for mission-level accounting, not as a decomposed model of transit, maneuvering, hovering, and control loads. The residual-AUV-energy curves follow this averaged mission budget, while current-aware thrust and detailed station-keeping power require a decomposed hydrodynamic model.}

We compare SA-OPS with \textcolor{blue}{four reference policies spanning different levels of intervention and state information}. The Communicate-Only strategy performs the required data transfer without energy injection, thereby representing \textcolor{blue}{a low-intervention boundary}. The Always-Charge strategy replenishes every encountered node to $W_{\text{node}}$, representing \textcolor{blue}{a high-intervention boundary}. \textcolor{blue}{Two literature-motivated greedy baselines provide stronger state-aware comparisons. Both poll $K=20$ currently available candidates and project each energy to the expected arrival time; hence, they use broader current-state information than SA-OPS, but no future realization. Energy-Deficit Priority (EDP), motivated by energy-urgency scheduling \cite{lei2023urgency}, selects}
{\color{blue}
\begin{align}
    i_{\mathrm{EDP}}^*=\arg\min_{i\in\mathcal{C}_K}E_i^{\mathrm{proj}}.
    \label{eq:edp_rule}
\end{align}
}
\textcolor{blue}{Greedy Utility-per-Joule (UPJ), motivated by charging-utility maximization \cite{ma2018charging}, applies the same state-feasible service sequence as SA-OPS to each candidate and ranks the resulting immediate benefit per AUV service joule:}
{\color{blue}
\begin{equation}
S_i^{\mathrm{UPJ}}=
\frac{r_i+[E_i^{\mathrm{final}}-E_i^{\mathrm{proj}}]^+/W_{\mathrm{node}}+h_i}
{C_i^{\mathrm{svc}}},
\label{eq:upj_score}
\end{equation}
}
\textcolor{blue}{where $r_i=1$ when candidate $i$ is initially critical and the service succeeds (and zero otherwise), $h_i=1$ when service moves it from below to at least $E_{\mathrm{healthy}}$ (and zero otherwise), and $C_i^{\mathrm{svc}}$ includes platform and optical-transmission energy during service. The three equal-weight terms reward critical-node rescue, normalized energy delivery, and crossing the healthy threshold. UPJ selects $i_{\mathrm{UPJ}}^*=\arg\max_{i\in\mathcal{C}_K}S_i^{\mathrm{UPJ}}$. EDP emphasizes urgency, whereas UPJ balances urgency and state improvement against service cost. Because the cited methods use different network, mobility, and charging models, EDP and UPJ implement their underlying prioritization principles within the present mission model. Their online ranking cost is $\mathcal{O}(K)$; SA-OPS acts on the single associated node with an $\mathcal{O}(1)$ threshold decision.}
\textcolor{blue}{For the MCDA threshold-selection study, the normalized threshold \(E_{\text{healthy}}/W_{\text{node}}\) is swept from \(0.20\) to \(0.90\) in increments of \(0.01\), giving 71 candidate values for each initial-energy setting. The separate five-policy comparison in Fig.~\ref{fig:six_panel_comparison} uses \(E_{\text{healthy}}=0.40W_{\text{node}}\) as one representative operating point, whereas the MCDA surfaces use the complete 71-point threshold grid.}

Performance is evaluated using a set of KPIs related to sustainability and energy usage. Network resilience is quantified by the network survival time in hours, defined as the post-mission duration until 50\% node failure. \textcolor{blue}{Specifically, if the final node energies are ordered as \(E_{(1),\mathrm{final}}\leq\cdots\leq E_{(N_{\mathrm{node}}),\mathrm{final}}\), then \(T_{50,\mathrm{post}}=E_{(\lceil N_{\mathrm{node}}/2\rceil),\mathrm{final}}/P_{\mathrm{sleep}}\). This metric starts from the final network state and excludes the elapsed AUV mission time; it measures median post-mission persistence and is used as one input to the subsequent MCDA study.} Operational effectiveness is measured by the critical node rescue efficiency, i.e., the success rate in servicing nodes with critical energy deficits ($E_{\text{res}} < E_{\text{comm}}$). We also report the total energy provided by the AUV and the final variance of network energy to characterize \textcolor{blue}{energy allocation across nodes}.

\subsection{Results Analysis}
\textcolor{blue}{This subsection analyzes density sensitivity, benchmark comparisons, and MCDA threshold selection by linking each trend to its service mechanism and simulation assumptions.}
\label{sec:density_analysis}

Before conducting a comparative policy evaluation, we first examine the sensitivity of the proposed SA-OPS policy to node density \textcolor{blue}{within the considered single-node servicing model}. Fig.~\ref{fig:density_analysis} presents the evolution of six mission metrics against both node density and the number of completed services. \textcolor{blue}{The dominant progression is along the service-count axis. By approximately 500 completed services, the finite 4.5-kWh AUV budget has been nearly exhausted over about 16.5 h, while the average node energy rises from 4.161 to 5.540 kJ, the energy variance falls from 10.035 to 4.311 kJ$^2$, the healthy-node fraction increases from 41.35\% to 86.90\%, and the critical-node rescue efficiency reaches about 92\%. Thus, the AUV budget is converted into three simultaneous network effects: higher mean stored energy, a more balanced energy distribution, and broader recovery of weak nodes. The nearly monotone AUV-energy and elapsed-time surfaces also provide an engineering budget-to-coverage map: increasing battery capacity, reducing aggregate platform power, or improving WPT efficiency directly increases the number of service events that can be completed before mission termination.}

\begin{figure*}[!t]
    \centering
    \includegraphics[width=\textwidth]{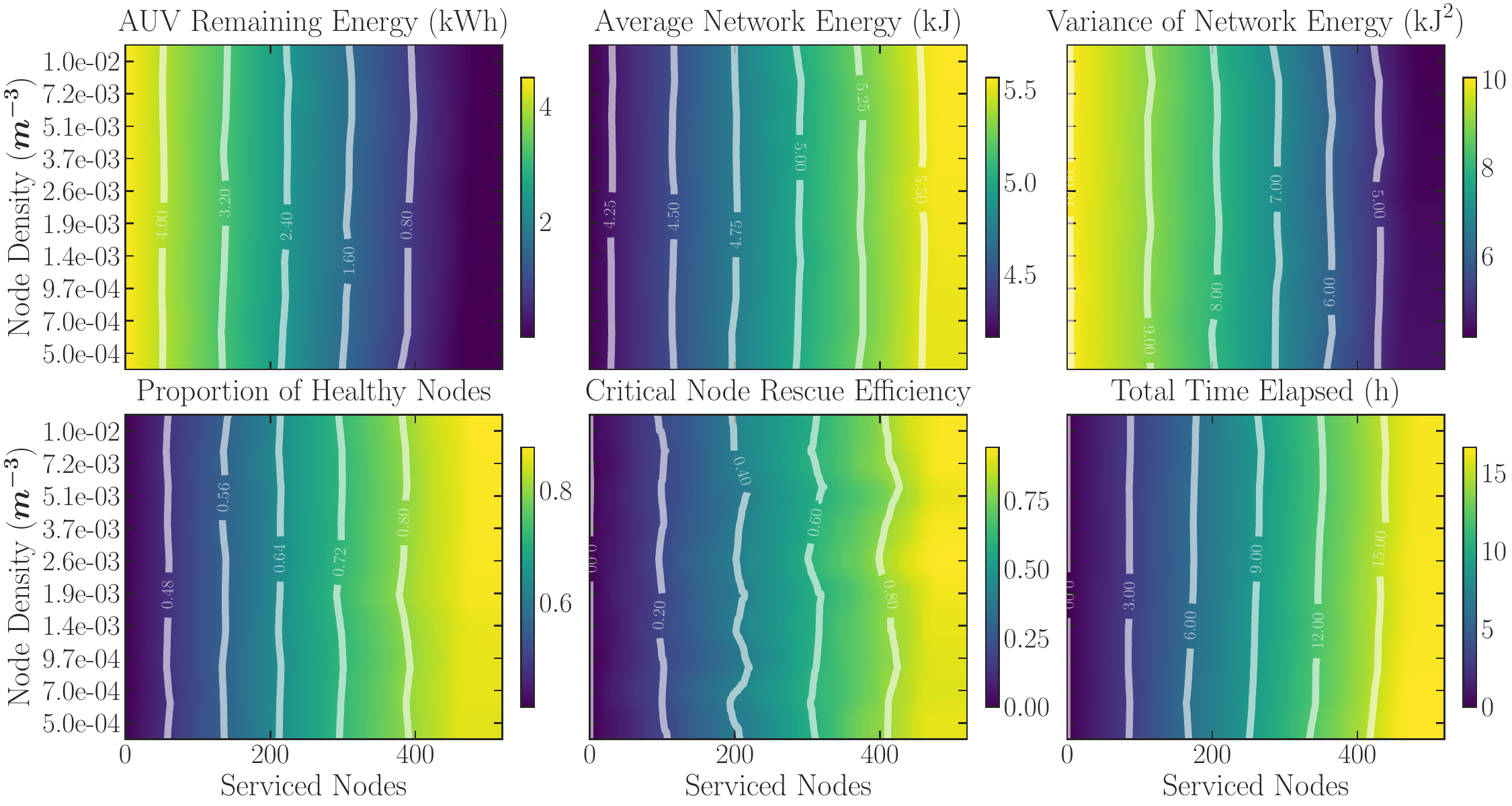}
    \caption{Performance evolution of the SA-OPS policy ($E_{\text{healthy}}=40\% \cdot W_{\text{node}}$) \textcolor{blue}{over node density and completed service events.}}
    \label{fig:density_analysis}
\end{figure*}

\textcolor{blue}{Across the tested density range, the values at 500 services change by only about 0.26 h in elapsed time, 0.051 kJ in average energy, 0.281 kJ$^2$ in variance, 2.12 percentage points in healthy-node fraction, and 2.88 percentage points in rescue efficiency. The mission ledger explains this weak sensitivity. At 500 services, the service dwell accounts for approximately 90.0--96.1\% of the elapsed time and 93.1--97.3\% of the AUV energy expenditure. This stage combines aggregate platform power while the AUV remains near the node with the WIT/WPT transmit cost, whereas density primarily changes the preceding discovery and transit intervals. After association, the same residual-energy rule determines the service action at every density. Consequently, the mission and network-energy surfaces are governed mainly by the accumulated number and type of service events over the evaluated range.}

\textcolor{blue}{Density remains explicit in the end-to-end process through $p_s=1-\exp(-\lambda_{\mathrm{node}}V_{\mathrm{success}})$, so lower density increases the discovery delay before each service event. In the sparse limit, $p_s\approx\lambda_{\mathrm{node}}V_{\mathrm{success}}$ and $E[T_{\mathrm{search}}]\approx T_{\mathrm{dwell}}/(\lambda_{\mathrm{node}}V_{\mathrm{success}})$, allowing search to become the dominant mission cost.}

\textcolor{blue}{At the opposite boundary, the visible-node count is Poisson with mean $\Lambda=\lambda_{\text{node}}V_{\mathrm{vis}}$ and $\Pr\{N_{\mathrm{vis}}\ge2\}=1-e^{-\Lambda}(1+\Lambda)$. When $\Lambda$ is not small, simultaneous replies can create association conflicts or collisions. The present discovery rule selects the strongest received-power candidate and maintains exclusive single-node association throughout service. Multi-node access and scheduling are required for denser regimes in which simultaneous visibility becomes frequent.}

We next compare SA-OPS with the adopted reference policies. Fig.~\ref{fig:six_panel_comparison} tracks the evolution of system performance and illustrates how the \textcolor{blue}{residual-energy-aware service rule} changes the allocation of communication and charging resources. \textcolor{blue}{The six panels should be read jointly because service coverage, mission duration, final energy level, energy balance, healthy-node conversion, and critical-node rescue reward different aspects of a policy.}

\begin{figure*}[!t]
    \centering
    \includegraphics[width=\textwidth]{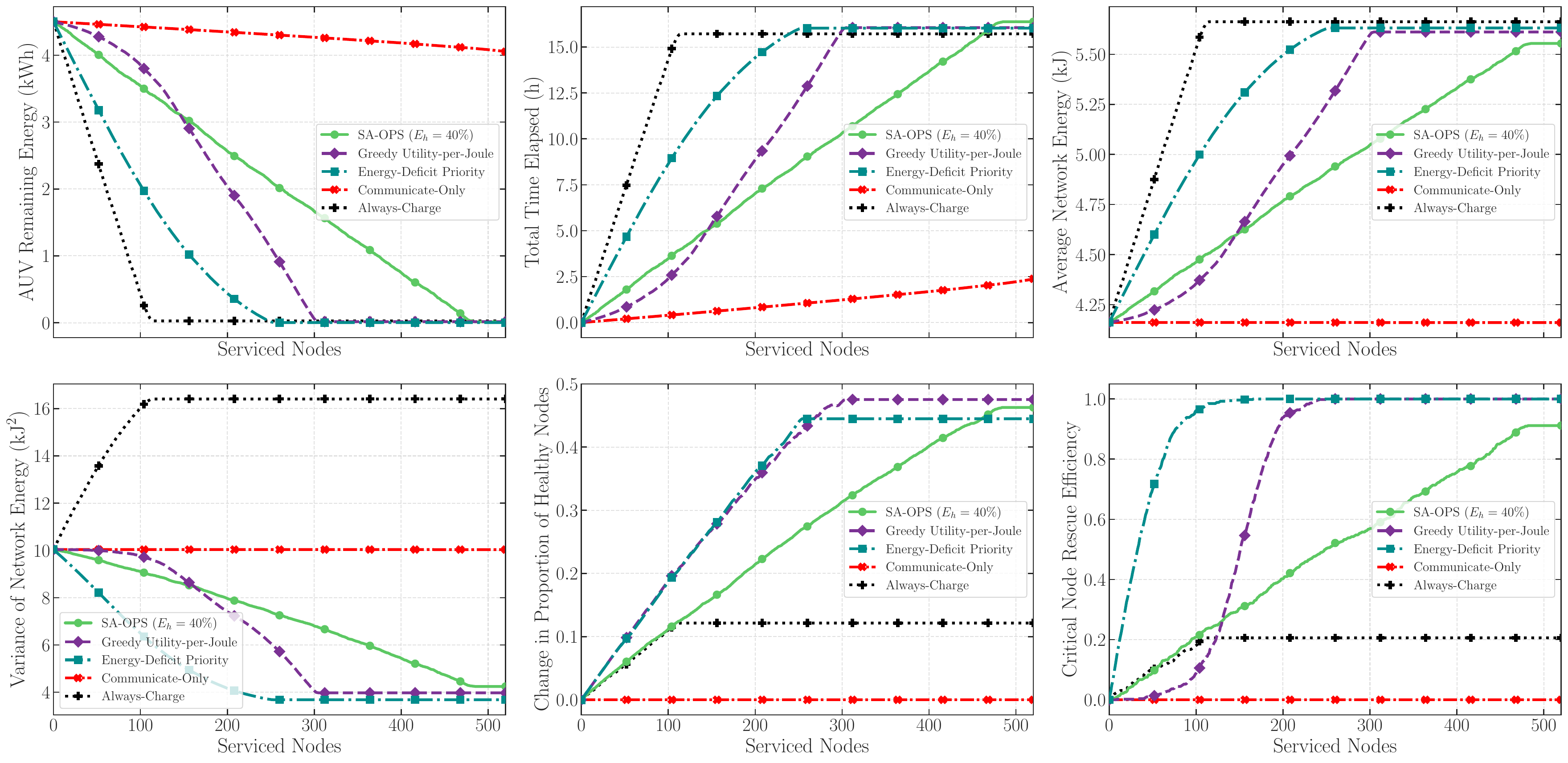}
    \caption{Comparative performance evolution of SA-OPS ($E_{\text{healthy}}=40\% \cdot W_{\text{node}}$) \textcolor{blue}{and four reference policies, averaged over ten Monte Carlo runs, for an initial network state containing 20\% healthy and 10\% critical nodes.}}
    \label{fig:six_panel_comparison}
        \vspace{-3mm}
\end{figure*}

\textcolor{blue}{Fig.~\ref{fig:six_panel_comparison} first exposes the coverage and intervention tradeoff. Communicate-Only reaches all 520 nodes because it spends no energy on WPT, but it neither replenishes energy nor rescues critical nodes. Always-Charge lies at the opposite extreme: full charging consumes the budget after a mean of $110\pm4$ services. EDP directs the AUV to the deepest projected deficits, so its rescue curve rises fastest per completed service and reaches 100\%, but this urgency-first choice terminates after $253\pm2$ services. UPJ also reaches 100\% rescue while extending coverage to $303\pm2$ services because its score discounts benefit by service energy. The service counts are integer-valued in each realization; the reported values are rounded Monte Carlo mean $\pm$ standard deviation.}

\textcolor{blue}{SA-OPS occupies a different information and coverage operating point. It uses only the associated node's energy report, requires neither a network-wide energy map nor $K$-candidate polling, and makes an $\mathcal{O}(1)$ decision. It nevertheless completes $480\pm4$ services (92.3\% coverage) and rescues 91.2\% of initially critical nodes. Because the energy-limited state-aware policies terminate after similar mission durations of approximately 16 h, these counts correspond to overall service throughputs of about 29.3 nodes/h for SA-OPS, 15.8 nodes/h for EDP, and 18.9 nodes/h for UPJ. The lower final rescue fraction of SA-OPS therefore reflects less aggressive prioritization of critical nodes, rather than slower overall service. In return, SA-OPS achieves substantially broader coverage, similar final average energy, lower current-state information requirements, and constant-complexity online decisions.}

\textcolor{blue}{The energy panels further show why no single KPI is sufficient. EDP's concentration on depleted nodes produces the lowest final variance, while UPJ accepts a modest increase in variance to cover roughly 50 additional nodes. Always-Charge yields the largest final average network energy because it performs far fewer repeated search, transit, and communication events and concentrates most of the finite mission budget on charging the comparatively small subset it reaches. The same concentration produces the largest energy variance, only 20.6\% rescue, and the lowest coverage. SA-OPS distributes service over most of the network and therefore provides a more even compromise among coverage, rescue, final average energy, and energy dispersion under limited state information. For panel-wise visualization, a terminated policy's final ordinate is held constant; this graphical continuation does not represent additional services.}

\textcolor{blue}{The threshold $E_{\text{healthy}}$ controls the intervention level. A lower threshold such as 40\% limits charging to more depleted nodes and preserves AUV energy for a larger number of encounters, whereas a higher threshold such as 80\% allocates more energy to each serviced node and can improve local post-service energy reserves at the cost of serving fewer nodes. The appropriate threshold therefore depends on the mission objective; no single value is universally optimal.}

\begin{figure*}[htbp]
    \centering
        \includegraphics[width=0.9\textwidth]{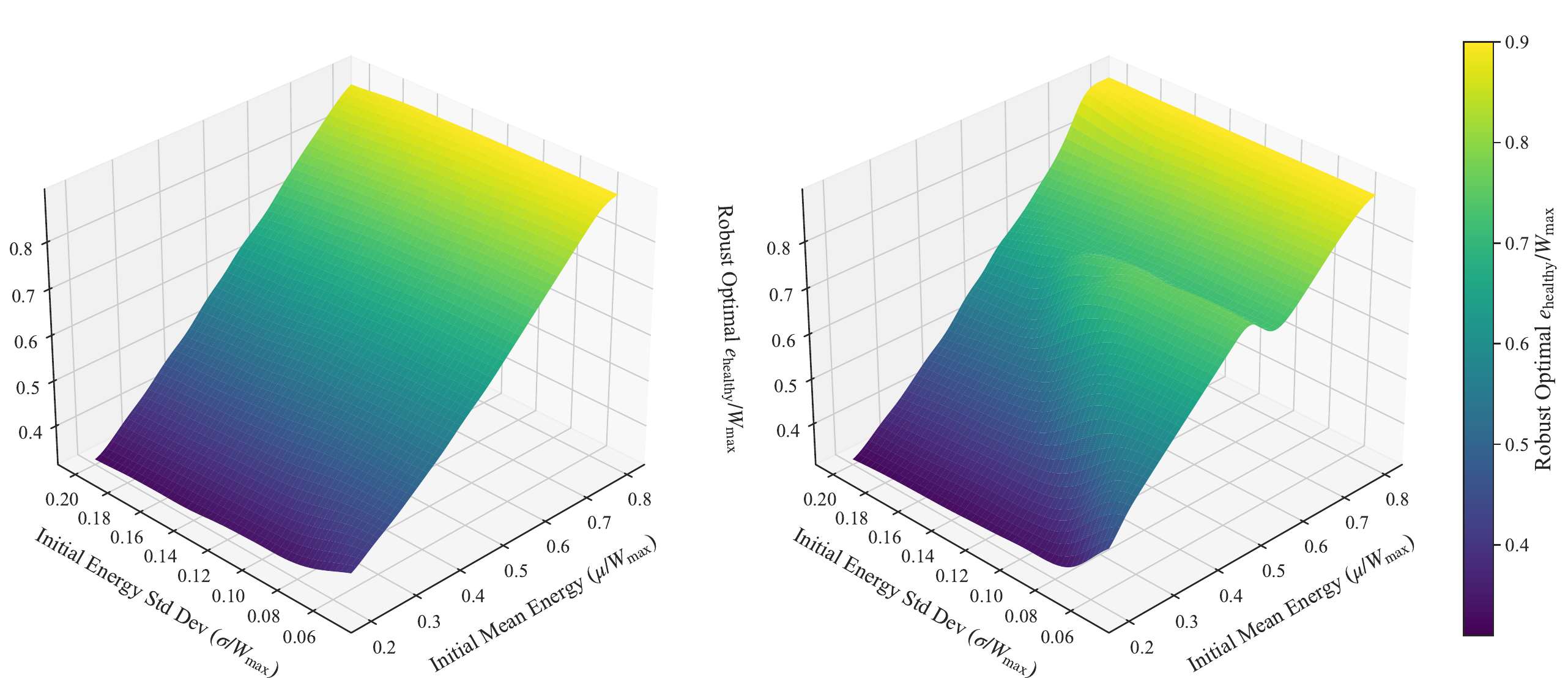}
    \caption{\textcolor{blue}{Robust \(E_{\text{healthy}}\) selected by TOPSIS (left) and LWS (right) for Gaussian initial-energy distributions with varying mean \(\mu\) and standard deviation \(\sigma\).}}
    \label{fig:robust_optimal_surface}
    \vspace{-3mm}
\end{figure*}

\textcolor{blue}{Selecting $E_{\text{healthy}}$ requires a multi-objective tradeoff between network sustainability and AUV energy expenditure. Let $g_e$ denote the SA-OPS rule governed by the normalized threshold $e=E_{\text{healthy}}/W_{\text{node}}$. For $N_{\mathrm{MC}}$ finite-mission realizations $\{\boldsymbol{\xi}_n\}_{n=1}^{N_{\mathrm{MC}}}$, its ensemble performance is estimated by the sample-average KPI vector $\widehat{\mathbf{k}}_{N_{\mathrm{MC}}}(g_e)=N_{\mathrm{MC}}^{-1}\sum_{n=1}^{N_{\mathrm{MC}}}\mathbf{k}(g_e;\boldsymbol{\xi}_n)$, including network survival time, critical-node rescue efficiency, total energy injection, and final energy variance. The offline threshold-selection objective is}
\begin{align}
    \textcolor{blue}{e^* = \arg\max_{e \in \{0.20,0.21,\ldots,0.90\}} \mathcal{U}\!\left(\widehat{\mathbf{k}}_{N_{\mathrm{MC}}}(g_e)\right)}.
    \label{eq:sample_average_threshold}
\end{align}
\textcolor{blue}{Because $\mathcal{U}$ depends on mission priorities, we use two complementary MCDA methods: linear weighted sum (LWS) \cite{hwang1981methods}, which permits compensation among metrics, and TOPSIS \cite{lai1994topsis}, which ranks alternatives by distance from ideal and anti-ideal points. The candidate grid is \(e\in\{0.20,0.21,\ldots,0.90\}\), giving 71 alternatives for each \((\mu,\sigma)\) setting. Each alternative is evaluated using ten Monte Carlo runs, with the same random initial networks reused across thresholds to reduce comparison noise. For every admissible KPI-weight vector \(\mathbf{w}\), the MCDA method selects one optimal threshold \(e^*_{\mathbf{w}}\); the robust threshold plotted in Fig.~\ref{fig:robust_optimal_surface} is the median of these weight-specific optima.}

The resulting optimal-threshold surfaces, illustrated in Fig.~\ref{fig:robust_optimal_surface}, suggest an asymmetric sensitivity to the mean and standard deviation of the initial energy distribution. \textcolor{blue}{Both methods primarily increase the selected threshold with the initial mean energy \(\mu\), while \(\sigma\) has a secondary effect over most of the tested domain. The TOPSIS surface is smooth: the offset \(e^*-\mu/W_{\text{node}}\) has a mean of 0.122, a median of 0.12, and a 10th--90th percentile interval of 0.10--0.14. LWS has the same median offset of 0.12, but its mean rises to 0.149 and its 90th percentile to 0.26 because of localized high-threshold regions.} This trend is consistent with the state-aware structure of SA-OPS, where decisions are made using individual node feedback. The observation motivates the following affine heuristic:
\begin{align}
    E_{\text{healthy}}^* \approx \mu + \beta \cdot W_{\text{node}}.
    \label{eq:heuristic}
\end{align}
In this formulation, $\beta$ is \textcolor{blue}{a scenario-dependent tuning coefficient} that controls the intervention intensity relative to the node battery capacity. A larger $\beta$ sets a higher target energy threshold, whereas a smaller $\beta$ keeps the threshold closer to the current mean energy and concentrates service effort on lower-energy nodes. \textcolor{blue}{The different surface shapes follow from the aggregation rules. LWS sums independently normalized KPIs and therefore permits a large gain in one metric to compensate directly for deterioration in another. For example, near \((\mu/W_{\text{node}},\sigma/W_{\text{node}})=(0.50,0.10)\), no initially critical nodes are present and rescue efficiency is zero for every threshold; network survival is maximized near \(e=0.76\), delivered energy near \(e=0.89\), and energy variance is minimized near \(e=0.61\). The competing energy and variance objectives cause LWS to switch operating regimes and produce a localized high-threshold region. TOPSIS instead evaluates the joint distance from ideal and anti-ideal KPI vectors and selects a smoother compromise, \(e=0.63\) at that point. Both methods exhibit a dominant dependence on the initial mean energy, while their local threshold selections differ where the KPI optima are widely separated.}

\textcolor{blue}{The affine structure \(E_{\text{healthy}}^*\approx\mu+\beta W_{\text{node}}\), rather than one universal numerical value of \(\beta\), is the main empirical finding. The common median offset suggests \(\beta\approx0.12\) as a first-order calibration for the present benchmark, while the LWS deviations show that this coefficient is not pointwise constant. It summarizes the balance among AUV energy capacity and platform power, optical attenuation and charging efficiency, node energy consumption, traffic demand, the KPI aggregation rule, and the selected mission-level weights; it must therefore be recalibrated when these conditions change. If battery aging is included, the KPI vector can be augmented with a degradation cost such as cumulative depth of discharge or capacity loss \cite{schmalstieg2014aging}, and the resulting coefficient would generally shift because deep-discharge avoidance receives explicit weight.}

\textcolor{blue}{A full global search over all environmental and operational parameters is beyond the scope of this study. Under the baseline setting, an offset of approximately 0.12 of the node capacity is the median robust choice under both MCDA methods, but LWS can select substantially larger offsets in localized regions. The numerical coefficient is therefore a central benchmark calibration rather than a universal or uniformly near-optimal constant; the transferable result is the first-order dependence on the current mean energy and the need for scenario- and objective-specific recalibration.}

\textcolor{blue}{From an implementation perspective, \(\beta\) may be calibrated offline or through short pre-deployment trials, after which the AUV can update \(E_{\text{healthy}}\) using a coarse estimate of the current mean network energy. The observed TOPSIS/LWS difference also provides a practical warning: missions that allow strong compensation among energy-injection and balance objectives may require a lookup table or online MCDA update rather than a single fixed offset. Additional validation is required before transferring the fitted value to substantially different environments, hardware platforms, or mission priorities.}


\section{Conclusion}
This paper presented \textcolor{blue}{an integrated framework} for mobile underwater WIT and WPT centered on the SA-OPS policy. The framework combines an SNR-based analysis of stochastic discovery with \textcolor{blue}{a threshold-based service rule} that uses residual-energy feedback. \textcolor{blue}{Fig. 6 shows how the finite AUV budget is converted into service coverage, increased mean node energy, reduced energy dispersion, and critical-node recovery. Over the tested range, density mainly changes discovery and transit costs, while the service dwell dominates the total mission time and energy.} The healthy-energy threshold can be approximated by a simple state-dependent heuristic, but its coefficient requires scenario-specific calibration. \textcolor{blue}{In the five-policy experiment, SA-OPS rescues 91.2\% of initially critical nodes and serves about 480 nodes on average, whereas the two information-assisted greedy policies rescue all initially critical nodes but terminate after about 253 and 303 services. The proposed policy therefore trades a modest reduction in critical-node prioritization for substantially broader coverage, lower current-state information requirements, and $\mathcal{O}(1)$ online decisions, while maintaining similar final average network energy.} These findings provide a practical basis for further study of autonomous underwater energy replenishment. \textcolor{blue}{The present model uses snapshot PPP node fields, an aggregate fading-factor representation without decomposing its internal environmental contributors, aggregate AUV operating-power accounting, linear mission-time node energy accounting, and exclusive single-node servicing. Future extensions include distribution-specific turbulence fading, current-aware mobility, nonlinear harvesting and battery state-of-health costs, measured protocol overhead and packet-error statistics, and multi-target access and scheduling when $\lambda_{\text{node}}V_{\mathrm{vis}}$ is no longer small.}
\label{sec:conclusion}


\bibliographystyle{IEEEtran}
\bibliography{IEEEabrv,reference}

\begin{IEEEbiography}
    [{\includegraphics[width=1in,height=1.32in,clip,keepaspectratio]{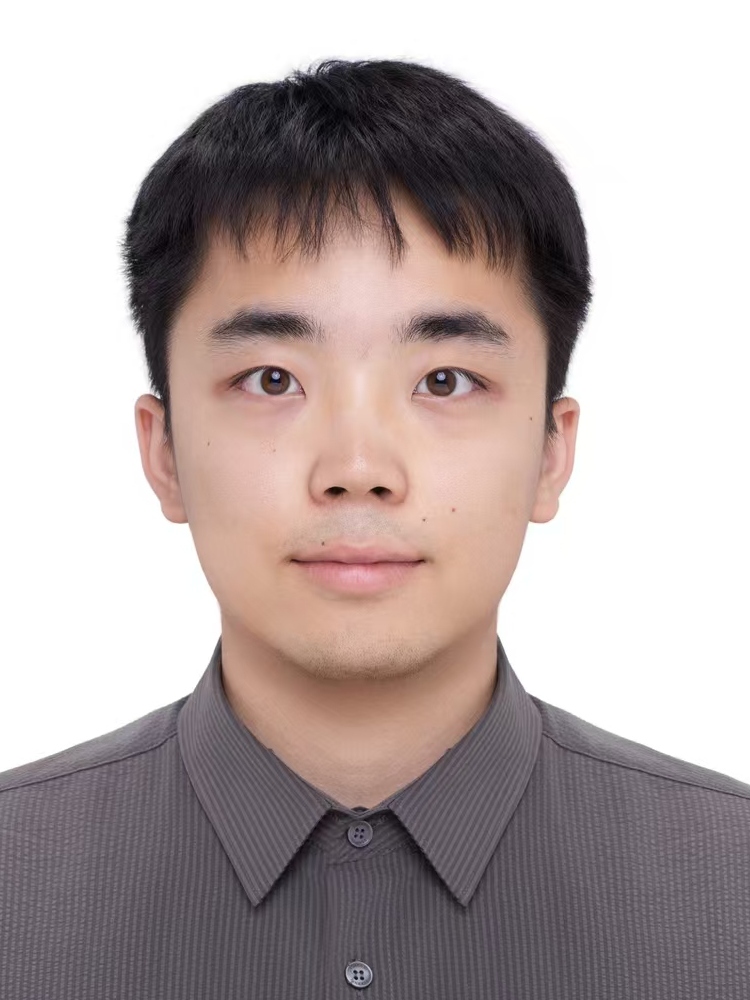}}]{Qiyu Ma}
 \textcolor{blue}{received the bachelor's degree in electronic engineering from Tsinghua University, Beijing, China, in 2026.} He is a visiting student in the Communication Theory Lab at KAUST. His research interests include \textcolor{blue}{hybrid underwater acoustic-optical wireless communications}, deep learning, and generative artificial intelligence.
\end{IEEEbiography}
\begin{IEEEbiography}
    [{\includegraphics[width=1in,height=1.32in,clip,keepaspectratio]{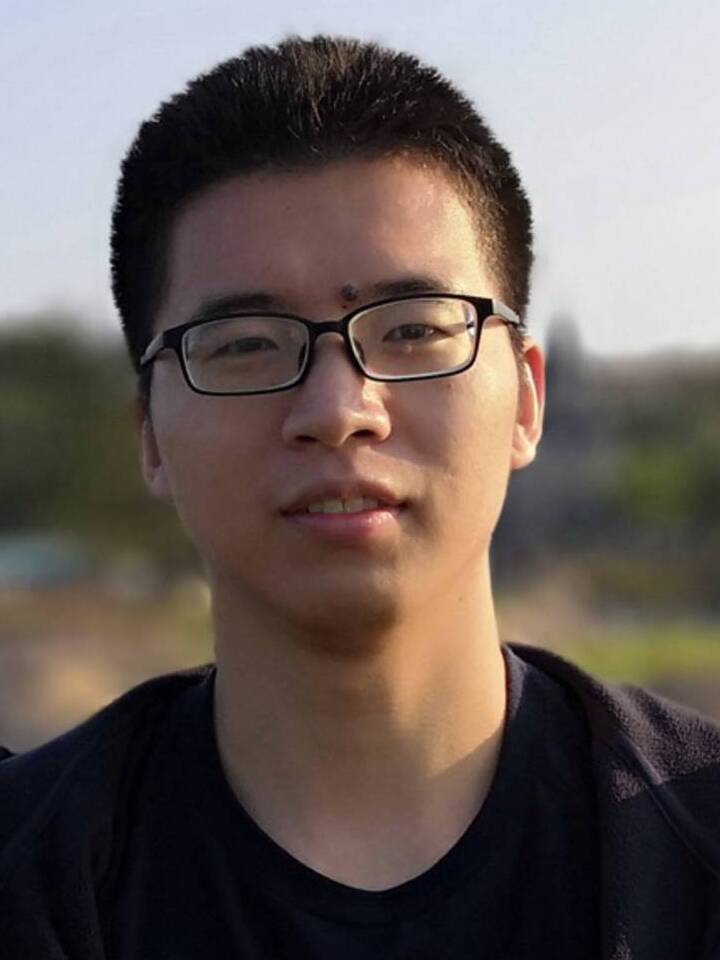}}]{Jiajie Xu}
received the M.Sc. and Ph.D. degrees from Yanshan University and King Abdullah University of Science and Technology (KAUST) in 2019 and 2023, respectively. He is a postdoctoral research fellow in the Communication Theory Lab at KAUST. His research interests include underwater communications and sensing, maritime and space-air-ground-sea integrated networks, stochastic geometry, and energy-harvesting wireless networks.
\end{IEEEbiography}
\begin{IEEEbiography}
    [{\includegraphics[width=1in,height=1.33in,clip,keepaspectratio]{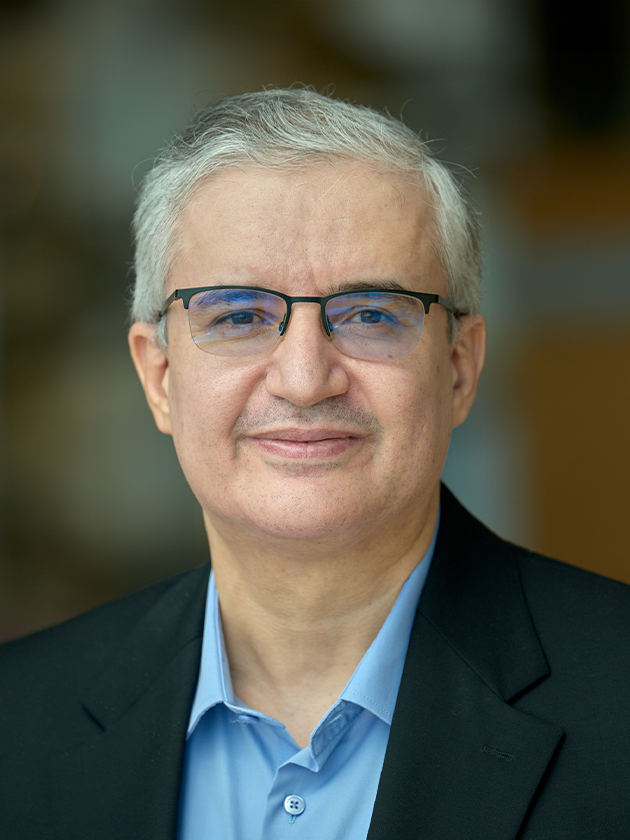}}]{Mohamed-Slim Alouini}
received the Ph.D. degree in electrical engineering from the California Institute of Technology, Pasadena, in 1998. After serving on the faculties of the University of Minnesota and Texas A\&M University at Qatar, he joined KAUST in 2009, where he is a Professor of Electrical Engineering. His research interests include the modeling, design, and performance analysis of wireless communication systems.
\end{IEEEbiography}

\end{document}

%% file: notation_jiajie.tex
\def\mba{{\mathbf{a}}} 
\def\mbb{{\mathbf{b}}}
\def\mbc{{\mathbf{c}}}
\def\mbd{{\mathbf{d}}}
\def\mbe{{\mathbf{e}}}
\def\mbf{{\mathbf{f}}}
\def\mbg{{\mathbf{g}}}
\def\mbh{{\mathbf{h}}}
\def\mbi{{\mathbf{i}}}
\def\mbj{{\mathbf{j}}}
\def\mbk{{\mathbf{k}}}
\def\mbl{{\mathbf{l}}}
\def\mbm{{\mathbf{m}}}
\def\mbn{{\mathbf{n}}}
\def\mbo{{\mathbf{o}}}
\def\mbp{{\mathbf{p}}}
\def\mbq{{\mathbf{q}}}
\def\mbr{{\mathbf{r}}}
\def\mbs{{\mathbf{s}}}
\def\mbt{{\mathbf{t}}}
\def\mbu{{\mathbf{u}}}
\def\mbv{{\mathbf{v}}}
\def\mbw{{\mathbf{w}}}
\def\mbx{{\mathbf{x}}}
\def\mby{{\mathbf{y}}}
\def\mbz{{\mathbf{z}}}
\def\mb0{{\mathbf{0}}}
\def\mb1{{\mathbf{1}}}

\def\mbA{{\mathbf{A}}} 
\def\mbB{{\mathbf{B}}}
\def\mbC{{\mathbf{C}}}
\def\mbD{{\mathbf{D}}}
\def\mbE{{\mathbf{E}}}
\def\mbF{{\mathbf{F}}}
\def\mbG{{\mathbf{G}}}
\def\mbH{{\mathbf{H}}}
\def\mbI{{\mathbf{I}}}
\def\mbJ{{\mathbf{J}}}
\def\mbK{{\mathbf{K}}}
\def\mbL{{\mathbf{L}}}
\def\mbM{{\mathbf{M}}}
\def\mbN{{\mathbf{N}}}
\def\mbO{{\mathbf{O}}}
\def\mbP{{\mathbf{P}}}
\def\mbQ{{\mathbf{Q}}}
\def\mbR{{\mathbf{R}}}
\def\mbS{{\mathbf{S}}}
\def\mbT{{\mathbf{T}}}
\def\mbU{{\mathbf{U}}}
\def\mbV{{\mathbf{V}}}
\def\mbW{{\mathbf{W}}}
\def\mbX{{\mathbf{X}}}
\def\mbY{{\mathbf{Y}}}
\def\mbZ{{\mathbf{Z}}}

\def\mcA{{\mathcal{A}}}
\def\mcB{{\mathcal{B}}}
\def\mcC{{\mathcal{C}}}
\def\mcD{{\mathcal{D}}}
\def\mcE{{\mathcal{E}}}
\def\mcF{{\mathcal{F}}}
\def\mcG{{\mathcal{G}}}
\def\mcH{{\mathcal{H}}}
\def\mcI{{\mathcal{I}}}
\def\mcJ{{\mathcal{J}}}
\def\mcK{{\mathcal{K}}}
\def\mcL{{\mathcal{L}}}
\def\mcM{{\mathcal{M}}}
\def\mcN{{\mathcal{N}}}
\def\mcO{{\mathcal{O}}}
\def\mcP{{\mathcal{P}}}
\def\mcQ{{\mathcal{Q}}}
\def\mcR{{\mathcal{R}}}
\def\mcS{{\mathcal{S}}}
\def\mcT{{\mathcal{T}}}
\def\mcU{{\mathcal{U}}}
\def\mcV{{\mathcal{V}}}
\def\mcW{{\mathcal{W}}}
\def\mcX{{\mathcal{X}}}
\def\mcY{{\mathcal{Y}}}
\def\ncalZ{{\mathcal{Z}}}

\def\mbbA{{\mathbb{A}}}
\def\mbbB{{\mathbb{B}}}
\def\mbbC{{\mathbb{C}}}
\def\mbbD{{\mathbb{D}}}
\def\mbbE{{\mathbb{E}}}
\def\mbbF{{\mathbb{F}}}
\def\mbbG{{\mathbb{G}}}
\def\mbbH{{\mathbb{H}}}
\def\mbbI{{\mathbb{I}}}
\def\mbbJ{{\mathbb{J}}}
\def\mbbK{{\mathbb{K}}}
\def\mbbL{{\mathbb{L}}}
\def\mbbM{{\mathbb{M}}}
\def\mbbN{{\mathbb{N}}}
\def\mbbO{{\mathbb{O}}}
\def\mbbP{{\mathbb{P}}}
\def\mbbQ{{\mathbb{Q}}}
\def\mbbR{{\mathbb{R}}}
\def\mbbS{{\mathbb{S}}}
\def\mbbT{{\mathbb{T}}}
\def\mbbU{{\mathbb{U}}}
\def\mbbV{{\mathbb{V}}}
\def\mbbW{{\mathbb{W}}}
\def\mbbX{{\mathbb{X}}}
\def\mbbY{{\mathbb{Y}}}
\def\nbbZ{{\mathbb{Z}}}

\def\mfrakR{{\mathfrak{R}}}

\def\mrma{{\rm a}}
\def\mrmb{{\rm b}}
\def\mrmc{{\rm c}}
\def\mrmd{{\rm d}}
\def\mrme{{\rm e}}
\def\mrmf{{\rm f}}
\def\mrmg{{\rm g}}
\def\mrmh{{\rm h}}
\def\mrmi{{\rm i}}
\def\mrmj{{\rm j}}
\def\mrmk{{\rm k}}
\def\mrml{{\rm l}}
\def\mrmm{{\rm m}}
\def\mrmn{{\rm n}}
\def\mrmo{{\rm o}}
\def\mrmp{{\rm p}}
\def\mrmq{{\rm q}}
\def\mrmr{{\rm r}}
\def\mrms{{\rm s}}
\def\mrmt{{\rm t}}
\def\mrmu{{\rm u}}
\def\mrmv{{\rm v}}
\def\mrmw{{\rm w}}
\def\mrmx{{\rm x}}
\def\mrmy{{\rm y}}
\def\mrmz{{\rm z}}

\newtheorem{lemma}{Lemma}
\newtheorem{definition}{Definition}
\newtheorem{remark}{Remark}
\newtheorem{theorem}{Theorem}
\newtheorem{proposition}{Proposition}
\newtheorem{corollary}{Corollary}
\newtheorem{example}{Example}
\newtheorem{assumption}{Assumption}

\newcommand{\upcite}[1]{\textsuperscript{\textsuperscript{\cite{#1}}}} 
\newcommand{\romann}[1]{\uppercase\expandafter{\romannumeral #1}} 
\newcommand{\Romann}[1]{\expandafter{\romannumeral #1}} 
\newcommand{\colorb}[1]{{\color{blue} #1}} 
\newcommand{\colorr}[1]{{\color{red} #1}}


\newcommand{\ceil}[1]{\lceil #1\rceil} 

\def\argmin{\operatorname{arg~min}}
\def\argmax{\operatorname{arg~max}}

\def\sinc{{\rm sinc}}
\def\cosc{{\rm cosc}}

\def\larrow{\leftarrow}
\def\rarrow{\rightarrow}
\def\triequ{\triangleq}
\def\simequ{\simeq}

\def\figref#1{Fig.\,\ref{#1}}%
\def\tabref#1{Table\,\ref{#1}}%
\def\equref#1{(\ref{#1})}%
\def\appref#1{Appendix\,\ref{#1}}%
\def\lemref#1{Lemma\,\ref{#1}}%
\def\defref#1{Definition\,\ref{#1}}%
\def\theref#1{Theorem\,\ref{#1}}%
\def\remref#1{Remark\,\ref{#1}}%
\def\secref#1{Sec.\,\ref{#1}}%

\def\assref#1{Assumption\,\ref{#1}}%
\def\ie{{\em i.e.}}
\def\eg{{\em e.g.}}
\def\rme{{\rm e}}
\def\rmd{{\rm d}}
\def\dB{{\rm dB}}
\def\x{\times}
\def\T{\intercal}
\def\H{\dagger}
\def\wbar{\overline}
\def\what{\widehat}
\def\d{{\rm d}}
\def\E{{\mathbb E}}
\def\pd{\partial}
\def\e{{\rm e}}
\def\1{\mathbbmtt{1}}
\def\var{\operatorname{Var}}
\def\cov{\operatorname{Cov}}
\def\mean{\operatorname{mean}}
\def\P{{\mathbb P}}

\def\R{{\mathbb R}}

\def\erfc{\operatorname{erfc}}
\def\erf{\operatorname{erf}}
\def\opt{\mathrm{opt}}

\def\sinr{\mathtt{SINR}}   
\def\snr{\mathtt{SNR}}
\def\sir{\mathtt{SIR}}
\def\scnr{\mathtt{SCNR}}
\def\ase{\mathtt{ASE}}
\def\se{\mathtt{SE}}